\documentclass{llncs}
\usepackage{graphicx,amssymb,amsmath}






\title{Recognition of Triangulation Duals of Simple Polygons With and Without Holes}

\author{Martin Derka\thanks{The first author was supported by Vanier CGS.} \and Alejandro L\'{o}pez-Ortiz
    \and Daniela Maftuleac}
\institute{David R. Cheriton School of Computer Science, University of Waterloo\\\email{\{mderka,alopez-o,dmaftule\}@uwaterloo.ca}}

\index{Derka, Martin}
\index{Lopez-Ortiz, Alejandro}
\index{Maftuleac, Daniela}


\begin{document}
\thispagestyle{empty}
\maketitle

\begin{abstract}
We investigate the problem of determining if a given graph corresponds to the dual of a triangulation
of a simple polygon. This is a graph recognition problem, where in our particular case we wish
to recognize a graph which corresponds to the dual of a triangulation of a simple polygon with or without
holes and interior points. We show that the difficulty of this problem depends critically on the amount of information given and
we give a sharp boundary between the various tractable and intractable versions of the problem.
\end{abstract}

\section{Introduction}

Triangulating a polygon is a common preprocessing step for polygon exploration algorithms~\cite{cit:Maf} among many other applications (see~\cite{cit:hjelle}). The exploration of the polygon is thus reduced to a traversal of the triangulation, which is equivalent to a vertex tour of the dual graph of the triangulation.  In the study of lower bounds for such a setting, the question often arises if a given constructed graph is or is not the dual of a triangulation of an actual polygonal region (with or without holes)~\cite{cit:Maf}.
Thus, the recognition of a graph class is a well established problem of theoretical interest and given the importance of triangulations likely to be of use in the future.
%
More formally,
given a graph, does it represent a triangulation dual of a simple polygon?
There are three aspects of this problem: the geometric problem, the topological problem and the combinatorial problem\footnote{In~\cite{cit:Sug2}, Sugihara and Hiroshima call ``the topological embedding problem'' what we call ``the combinatorial problem'' here.}.
In the geometric problem, we are given a precise embedding of the graph.  In the topological problem, we are
given a topological embedding (also called ``face embedding''). In the combinatorial problem, we are given the adjacency matrix only. 
Furthermore, the problem can be stated in both the decision version when the task is to recognize the graph of a triangulation, and the constructive version when the task is to realize the corresponding triangulation.
For some graph classes, recognition may be easier than realization.

Some specialized versions of this problem were studied in the past.
Sugihara, and Hiroshima~\cite{cit:Sug2} as well as Snoeyink and van Kreveld~\cite{cit:Sno} consider the problem of realization of a Delaunay triangulation for the combinatorial version of the problem.
In~\cite{cit:Oka}, the authors define three aspects of the recognition problem of a Voronoi/Delaunay diagram, where the first two of them are what we call the geometric and topological aspects.
The most relevant part of their work is the following question in the geometrical setting~\cite[Problem V10, p.~108]{cit:Oka}: \emph{Given a triangulation graph, decide whether it is a (non-degenerate) Delaunay triangulation realizable graph}. For this case, the authors give necessary and sufficient conditions for a graph to be Delaunay triangulation realizable graph in the geometric setting.

In this paper, we study the problem of recognizing the dual of a triangulation of a simple polygon with or without holes and interior points in the geometric, topological and combinatorial setting.
To the best of our knowledge, this paper is the first work which considers the problem for general triangulations of polygons. We draw a clear line between tractability and NP-completeness of the problem as the degrees
of freedom increase from the geometric to the topological to the combinatorial problem and as we consider holes. Our results are summarized in Table~\ref{tab:results}. The recognition algorithms presented in this paper are constructive and allow realization of the polygon.

\begin{table}
\centering
%
\includegraphics[width=\columnwidth]{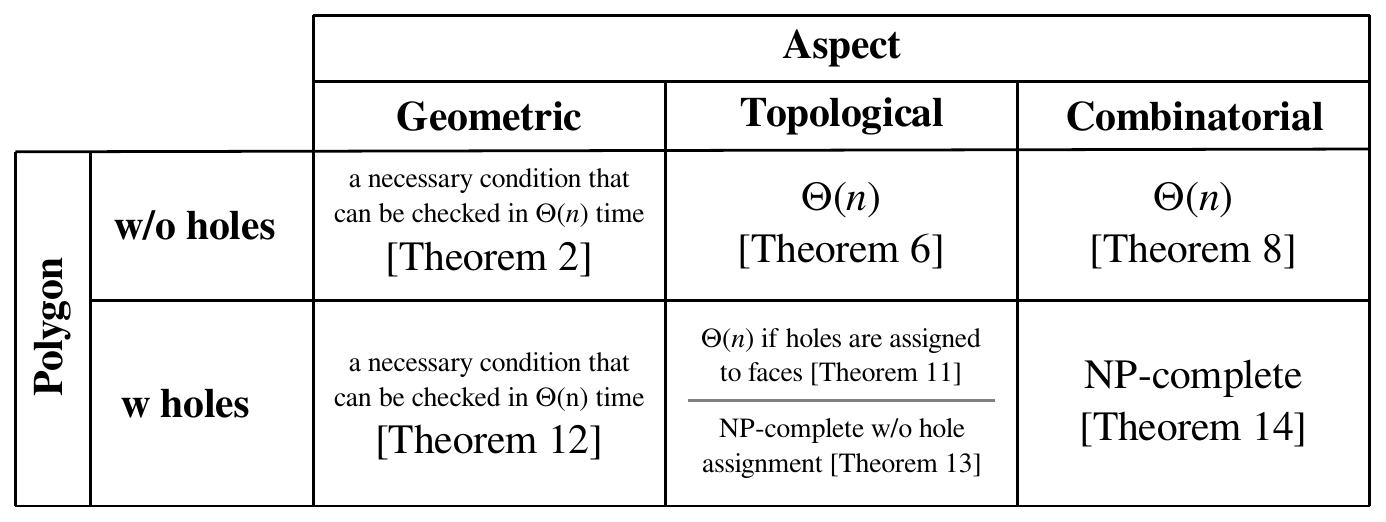}
\caption{Summary of results.}
\label{tab:results}
\end{table}


\section{Preliminaries}
\label{sec:preliminaries}



Let $P$ be a simple polygon with or without holes with $n$ vertices, $S$ a set of $m$ interior points located inside $P$ and $\mathcal T$ a \emph{triangulation} of the $n+m$ given points inside $P$ (for an example of a triangulation, see solid lines in Fig.~\ref{fig:infinity}(a)).
Let $G$ be the \emph{graph of the triangulation} $\mathcal T$ as the graph on vertices $P\cup S$ plus an additional vertex $v$ ``at infinity''
located outside $P$ and the edges of $G$ are the edges of $\mathcal{T}$ plus the edges connecting every vertex on the boundary of $P$ to $v$ (see Fig.~\ref{fig:infinity}).

\begin{figure}[h]
\centering
\includegraphics[width=.6\textwidth]{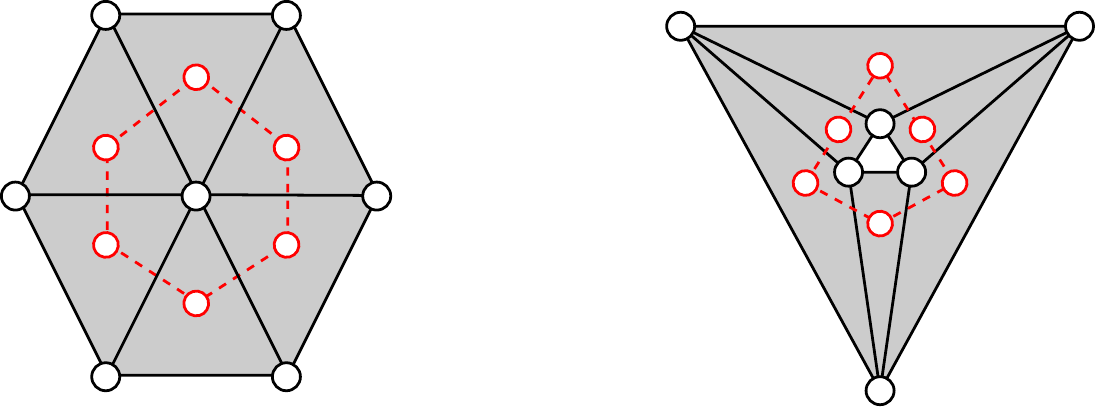}
\caption{Two triangulations of a polygon with isomorphic dual graphs if the point at infinity is omitted.}
\label{fig:notunique}
\end{figure}

This paper reconstructs triangulations of polygons from duals via
reconstructing their graphs (which include the point at infinity). As we show, the point
at infinity provides one with tools which are fully sufficient for such a reconstruction. If
graphs of triangulations were defined without points at infinity, one would discover that
there are many triangulations of a polygon with the same dual (see Fig.~\ref{fig:notunique}).
Furthermore, we suggest that adding the point at infinity to representations
of triangulations is easy to accomplish: Given a triangulation $\mathcal{T}$ of a polygon, one can
construct its graph $G$ by adding the point at infinity. In the other direction, if the vertex at infinity
is known, one can easily construct triangulation $\mathcal{T}$ from its graph $G$. The information about
which is the point at infinity can be given as a part of the input, or in some cases, this may be even
implicitly determined by formulation of the problem (see Definition~\ref{def:problem}; TDR-without-holes).

\begin{figure}
\begin{tabular}{cccc}
\includegraphics[width=.4\textwidth]{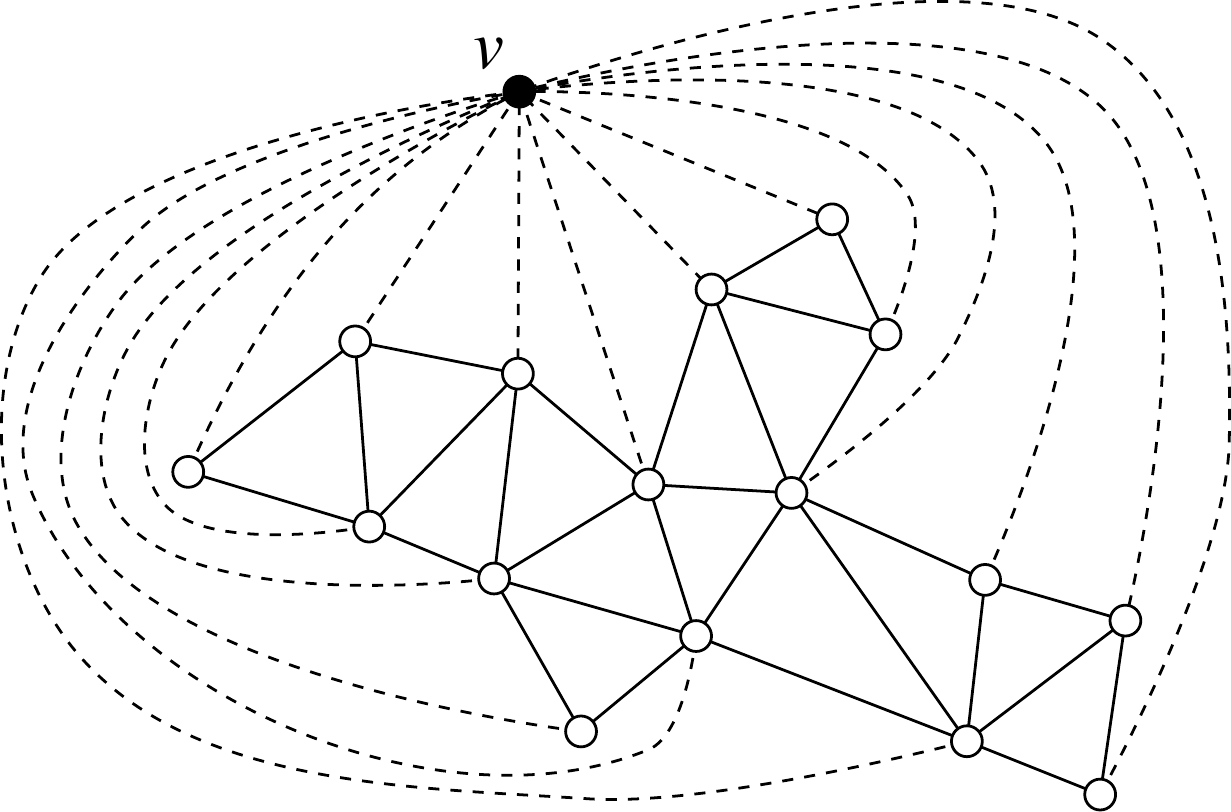}
   &
   ~~~~~~~~~~~~~
   &
\includegraphics[width=.4\textwidth]{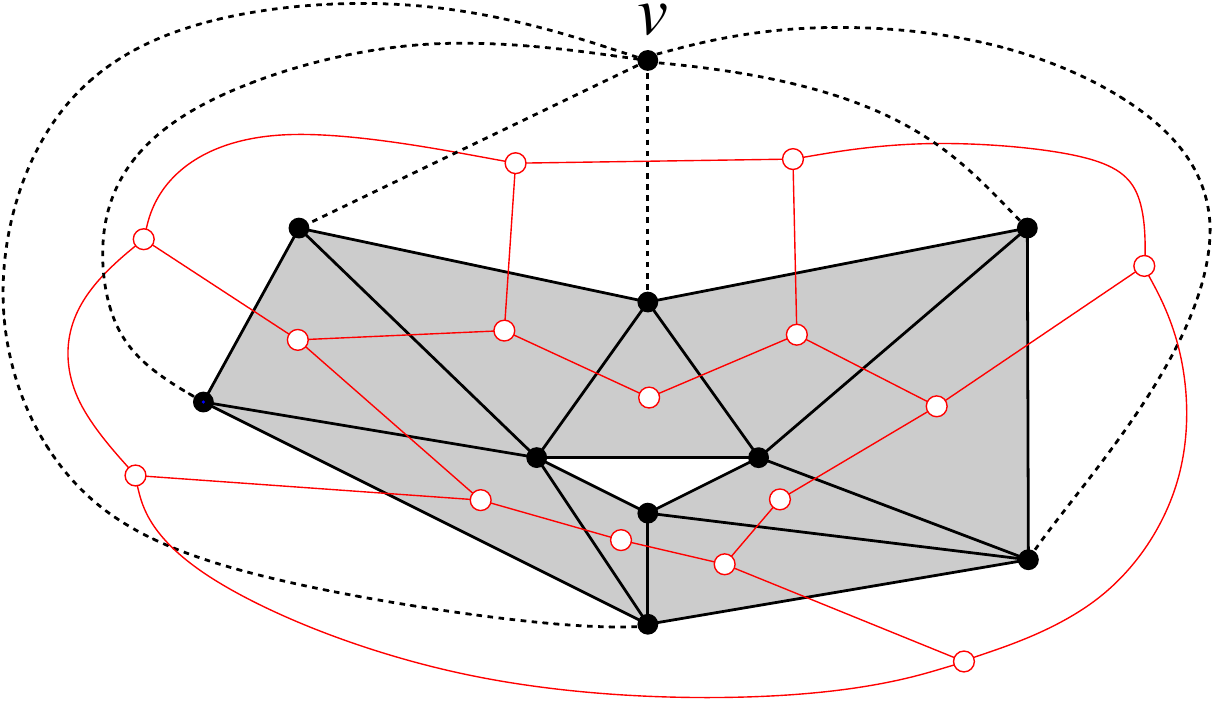}
   \\
   (a)
   &
   ~~~~~~~~~~~~~
   &
   (b)
\end{tabular}
\caption{(a) An example of a triangulation of a polygon (solid lines) and its graph (solid and dashed lines),
(b) a polygonal region $P$ with one (white) hole
(shown in solid black lines and gray interior);
its triangulation $\mathcal T$ (in solid black lines);
the graph $G$ of the triangulation $\mathcal T$ (in black, solid and dashed lines);
the dual graph $G^*$ of the triangulation (in solid red lines).}
\label{fig:infinity}
\end{figure}

Given a plane graph $\Gamma$,
the \emph{dual graph} of $\Gamma$, denoted by $\Gamma^*$, is a planar graph whose vertex set is formed by the faces of $\Gamma$ (including the outer face), and two vertices in $\Gamma^*$ are adjacent if and only if the corresponding faces in $\Gamma$ share an edge. 

\noindent
Let $G$ be a graph of a triangulation of a polygon $P$ and $G^*$ its the dual graph. For brevity, we say that $G^*$ is the
\emph{dual graph of the triangulation} $\mathcal T$ and from now on we will use this notion instead of ``the dual graph of the graph of a triangulation $\mathcal T$.''

\begin{definition}[The TDR Problems]
\label{def:problem}
Given a planar graph $G^*$, decide if
$G^*$ is a dual graph of a triangulation of a polygon $P$ with a set of interior points $S$. We distinguish between:
(1) \emph{TDR-without-holes} if $P$ is not allowed to have holes and $S = \emptyset$;
(2) \emph{TDRS-without-holes} if $P$ is not allowed to have holes and $S$ may be non-empty;
(3) \emph{TDR-with-known-holes} if $P$ is allowed to have holes, $S = \emptyset$, and the positions of holes are part of the input; and
(4) \emph{TDR-with-unknown-holes} if $P$ is allowed to have holes, $S = \emptyset$, and the positions of holes are unknown.
\end{definition}

The following proposition summarizes some well-known facts about planar graphs and their duals.
\begin{proposition}
\label{prop:simple-properties}
1.~The dual of a planar graph $G$ is a planar graph.
2.~The embedding of a $3$-connected graph is unique up to the choice of the outer face.
3.~The dual graph of a $3$-connected planar graph is a $3$-connected planar graph.
\end{proposition}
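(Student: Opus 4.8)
The plan is to treat the three parts separately, since each is a classical fact with its own standard technique. For part (1), I would argue directly from a fixed plane embedding of $G$. Place one dual vertex in the interior of each face of $G$ (including the outer face), and for every edge $e$ of $G$ bordering faces $f$ and $g$, draw a dual edge joining the vertex of $f$ to the vertex of $g$ that crosses $e$ exactly once and otherwise stays inside $f\cup g$ in a thin neighbourhood of $e$. Distinct primal edges are then crossed by disjoint dual arcs, and the arcs incident to a common dual vertex can be ordered around it without crossings, yielding a plane drawing of $G^*$. Hence $G^*$ is planar. This step is routine and uses only the fixed embedding.

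Part (2) is Whitney's uniqueness theorem, and the key is to characterize the facial cycles of $G$ in purely combinatorial terms. The plan is to show that in a $3$-connected planar graph a cycle $C$ bounds a face in some (hence every) embedding if and only if $C$ is \emph{peripheral}, i.e.\ induced (chordless) and non-separating, meaning $G - V(C)$ is connected. Granting this, the set of facial cycles is an invariant of $G$ alone; from the facial cycles one recovers the cyclic rotation of edges at each vertex up to a single global reversal, so the embedding is determined up to reflection and up to the choice of which face is outer. The main obstacle — and the part I expect to be hardest — is proving the peripheral-cycle characterization, which is exactly where $3$-connectivity enters essentially, through Menger-type connectivity arguments.

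For part (3), planarity of $G^*$ is immediate from part (1), so only $3$-connectivity remains. First I would check that $G^*$ is simple: since $G$ is $3$-connected, its edge connectivity satisfies $\lambda(G)\geq\kappa(G)\geq 3$, so $G$ has no bridge (no loops in $G^*$, as a bridge is a bond of size one) and no edge cut of size two (no two faces share two edges, so no multiple edges in $G^*$). For vertex $3$-connectivity I would use the embedding duality: a hypothetical cut $\{f_1^*,f_2^*\}$ of $G^*$ would split the remaining faces of $G$ into two nonempty classes sharing no common edge, and I would translate the separating configuration formed by $\partial f_1\cup\partial f_2$ in the plane into a vertex cut of $G$ of size at most two, contradicting $3$-connectivity of $G$. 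Alternatively, and more cleanly, I would invoke Steinitz's theorem: the $3$-connected planar graphs are precisely the edge graphs of $3$-dimensional convex polytopes, and $G^*$ is the edge graph of the dual polytope, which is again a $3$-polytope and therefore $3$-connected. The only delicate point is the planar translation argument in the direct approach; the polytope route sidesteps it entirely.
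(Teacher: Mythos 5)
Your proposal is correct and follows essentially the same route as the paper: part (1) by placing a dual vertex in each face and drawing crossing-free arcs through the primal edges, part (2) by appeal to Whitney's theorem (the paper simply cites it, whereas you sketch the standard peripheral-cycle argument behind it), and part (3) by Steinitz's theorem and polytope duality, which is exactly the paper's ``quick argument.'' The extra direct planar-duality argument you offer for (3) is a valid alternative but is not needed once the Steinitz route is taken.
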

\begin{proof}
\noindent (1) Consider a planar embedding of $G$. Every vertex of $G^*$ can be embedded inside a face that it represents and connected to any point on the boundary of the face by a ``half-edge'' without introducing any crossings. By joining the ``half-edges'', one can construct a planar embedding of $G^*$.

\noindent (2) is a well-known theorem of Whitney; see e.g.~\cite{cit:diestel} for the proof.

\noindent (3) is another well-known fact. A quick argument can be given using Steinitz's theorem~\cite{cit:grun}.
A planar 3-connected graph $G$ can be realized as a polyhedron $P$. Take its dual polyhedron $P'$, whose graph is $G^*$, i.e., the dual graph to $G$. Using Steinitz's theorem again, $G^*$ is planar and 3-connected.\qed
\end{proof}

\section{Triangulation Dual Recognition (TDR)}
\label{sec:recognition}

We present a sequence of increasingly complex dual recognition problems. We draw a clear line between the tractability of the problem and the NP-completeness depending on the degrees of freedom in the particular setting being considered.
We first establish some properties of the triangulation dual of a polygon that will
allow us to decide if the input graph is a dual of a triangulation or not.
We consider separately the cases where the triangulated polygon has holes or not,
and contains interior points or not.

We consider three aspects of this problem depending on the amount of information given. In the most restricted case, we are given a \emph{geometric} embedding of the dual of a triangulation. Each
triangle of $G$ is represented in the dual $G^*$ by a distinguished point in its interior. In particular, following Hartvigsen \cite{cit:hart} we consider the circumcenter of the triangle 
(which does not necessarily lie inside the triangle) and we are given the edge adjacencies between the triangles. In the second
case we are given the faces of the dual of the triangulation but not their precise geometric embedding. This forms the \emph{topological} recognition problem. Lastly, in the least restrictive case we are simply given a dual graph without any knowledge of which vertices form a face in the triangulation dual. This is the \emph{combinatorial} recognition problem.

\paragraph{Geometric TDR- and TDRS-without-holes.}\label{subs:gtdrwoh}

For the geometric recognition problem, we do not consider the point at infinity, since it does not have a natural geometric representation. Thus, in this problem the input is a geometric embedding of the dual of the triangulation
$\mathcal T$. In the dual, each triangle is represented by a distinguished point. The natural choices for such a point are (a) the circumcenter, (b) the incenter, (c) the orthocenter, (d) the centroid or (e) an arbitrary point in the interior of the triangle.

For the case of (a), the circumcenter, which is the choice of Hartvigsen for the recognition of Delaunay triangulations~\cite{cit:hart}, we use a similar technique and create a two dimensional linear program. This is based on the observation that the edges in the triangulation are perpendicular to the dual edges in the geometric embedding. The intersections of such edges are the vertices of the polygon. Observe as well that the center of the triangulation edges lies on the corresponding dual edge. We can then set up a linear program with the coordinates of the vertices of the polygon as unknowns, and the orthogonality and bisection equations as linear constraints. We then solve the two dimensional LP program in linear time using Megiddo's fixed dimension LP algorithm~\cite{cit:megi}. If there
is no feasible solution then we know that necessarily the given input graph is not the dual of a triangulation
since otherwise, the actual triangulation graph satisfies the given linear constraints.

A similar approach works for the case of (c) the centroid. The location of the vertices and the median points are the unknowns and the collinearity with the centroid is expressed as a convex combination of those two vertices
with the centroid trisecting the line segment (i.e. $\lambda=1/3$ in the convex combination equation). This produces a set of linear equations which can also be addressed using an LP solver.

\begin{figure}[h]\centering
\begin{tabular}{cc}
   \includegraphics[width=0.75\textwidth]{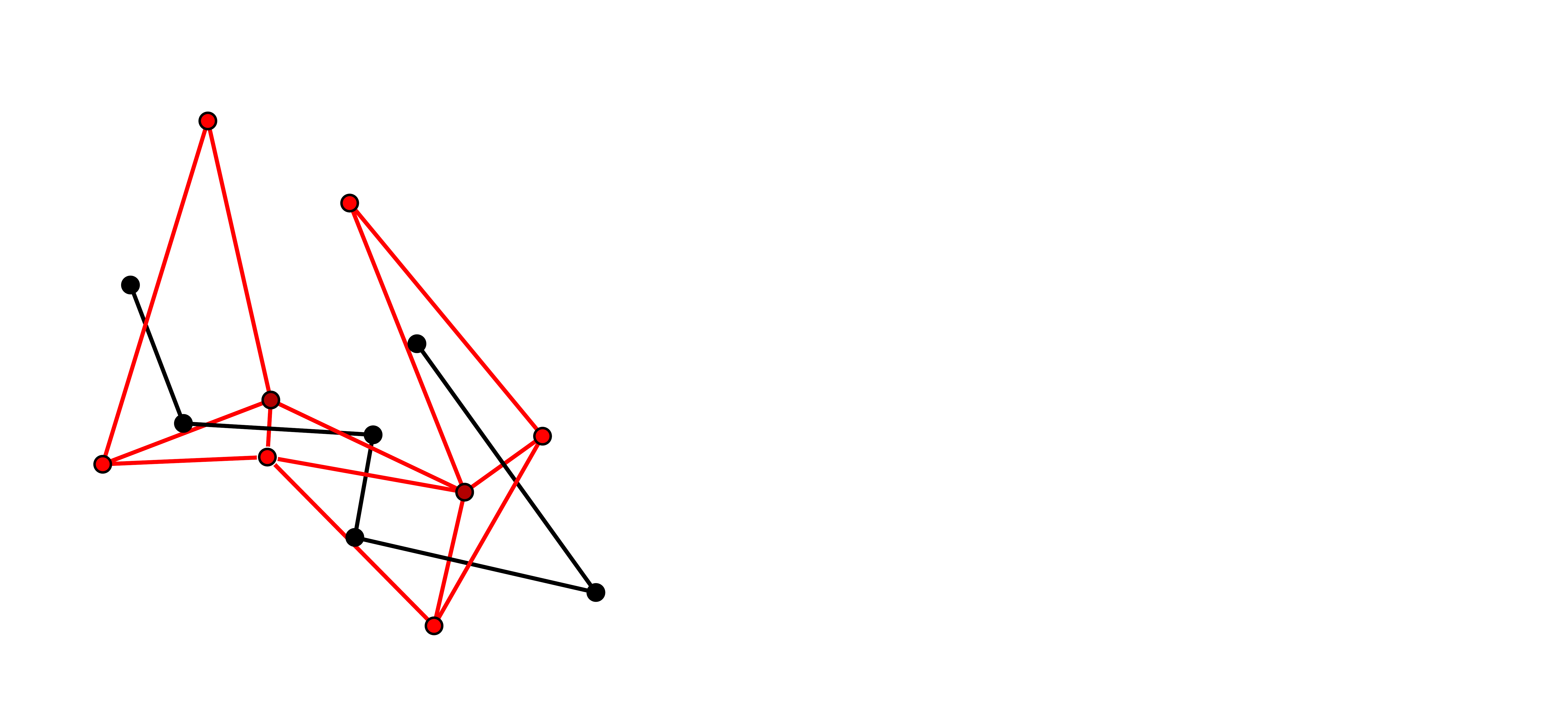}\hspace*{-5cm}
   &
   \includegraphics[width=0.75\textwidth]{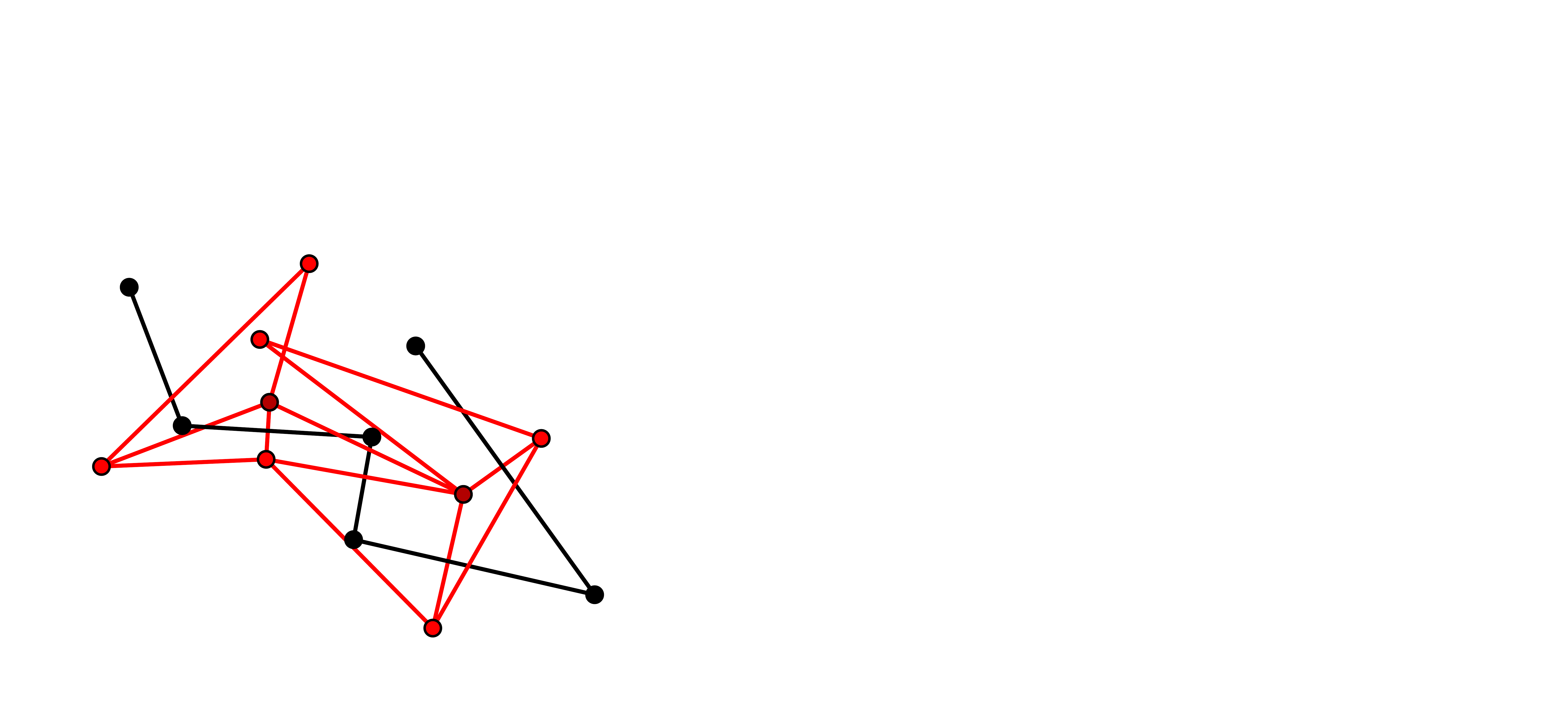}\hspace*{-4cm}
   \\
   (a)
   &
   (b)
\end{tabular}
\caption{The figure illustrates a dual graph in black with the vertices representing circumcenters of triangles of a triangulation. The red edges show a candidate triangulation graph constructed from the LP-solver. Diagram (a) illustrates a valid triangulation and (b) an invalid triangulation that violates planarity.} \label{fig:circumcentres}
\end{figure}

\begin{figure}[h]\centering
\begin{tabular}{ccc}
   \includegraphics[width=0.5\textwidth]{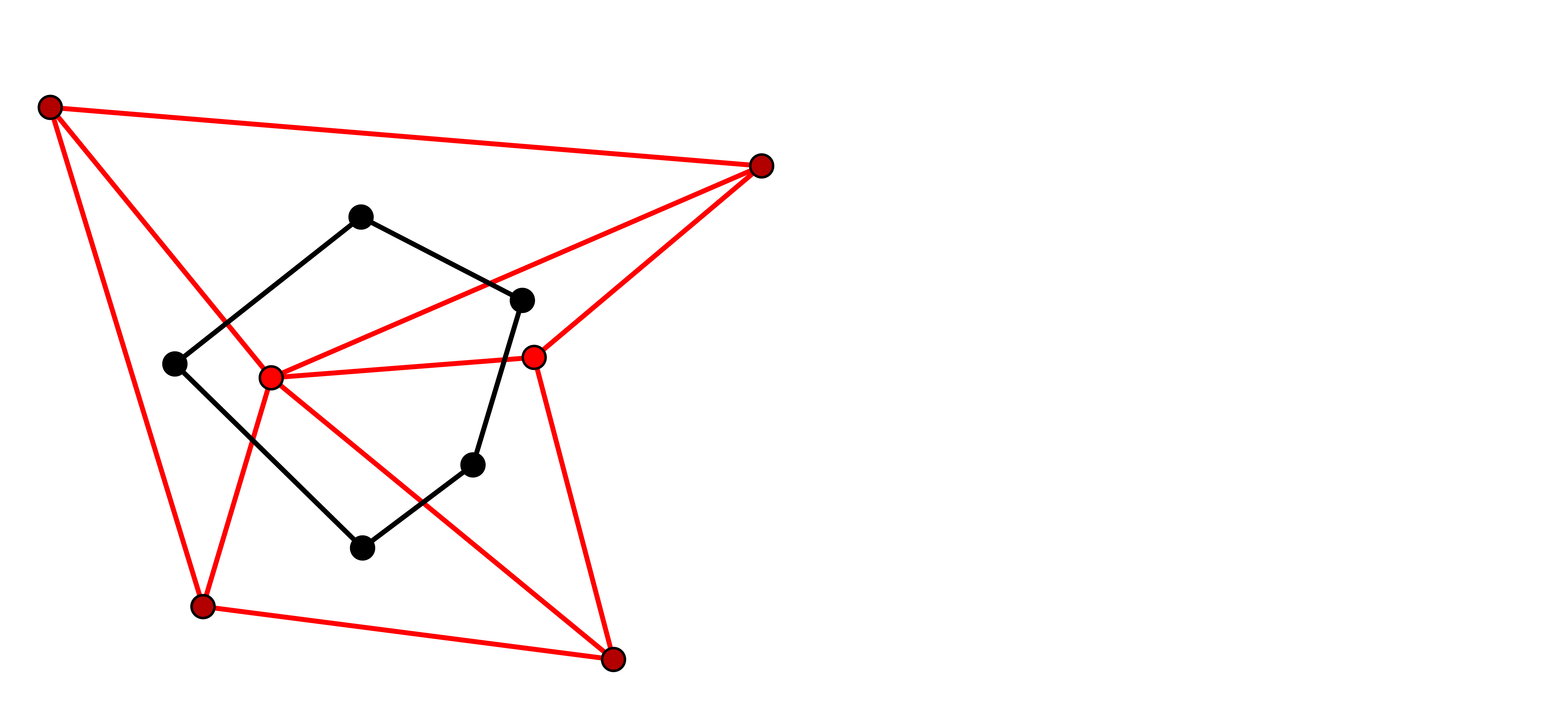} \hspace*{-3cm}
   &
   \includegraphics[width=0.5\textwidth]{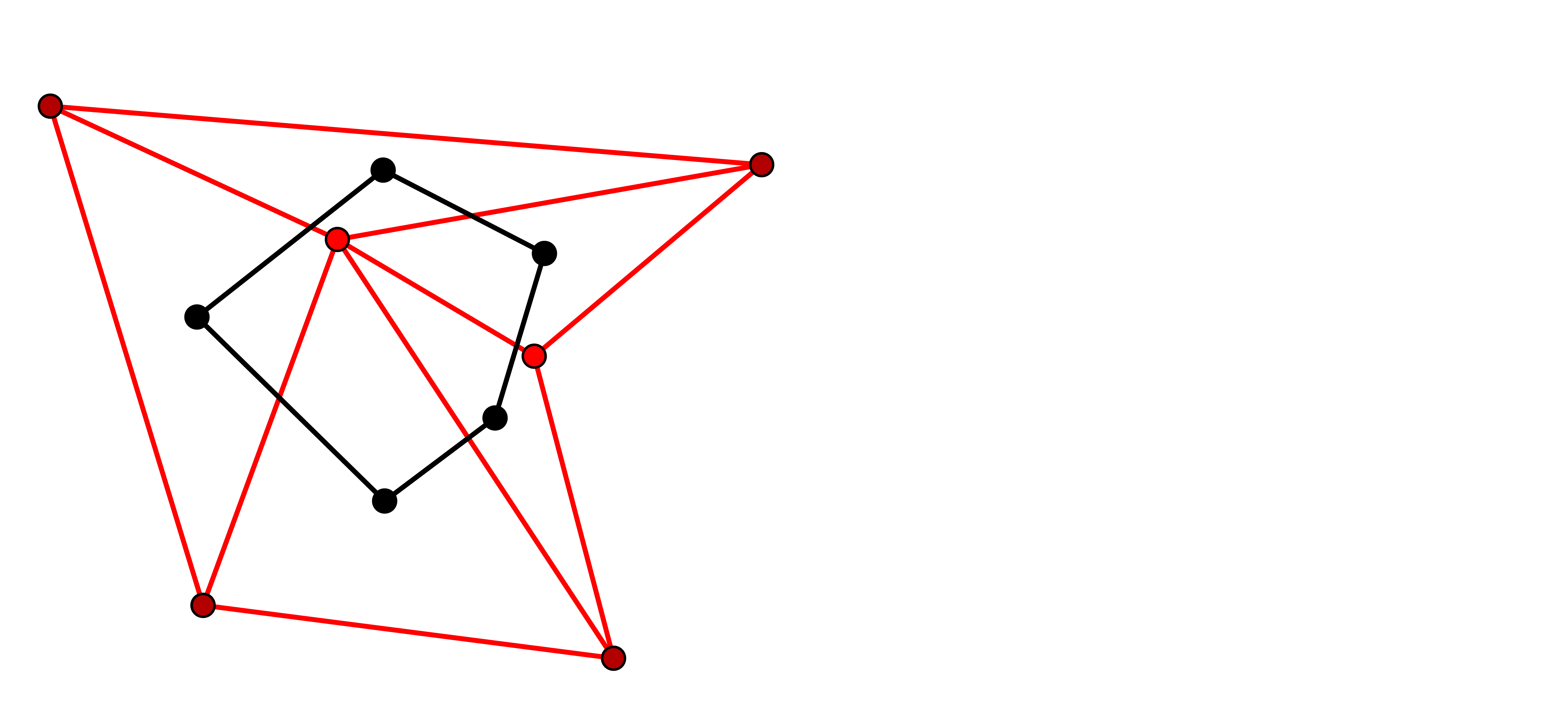} \hspace*{-3cm}
   &
   \includegraphics[width=0.5\textwidth]{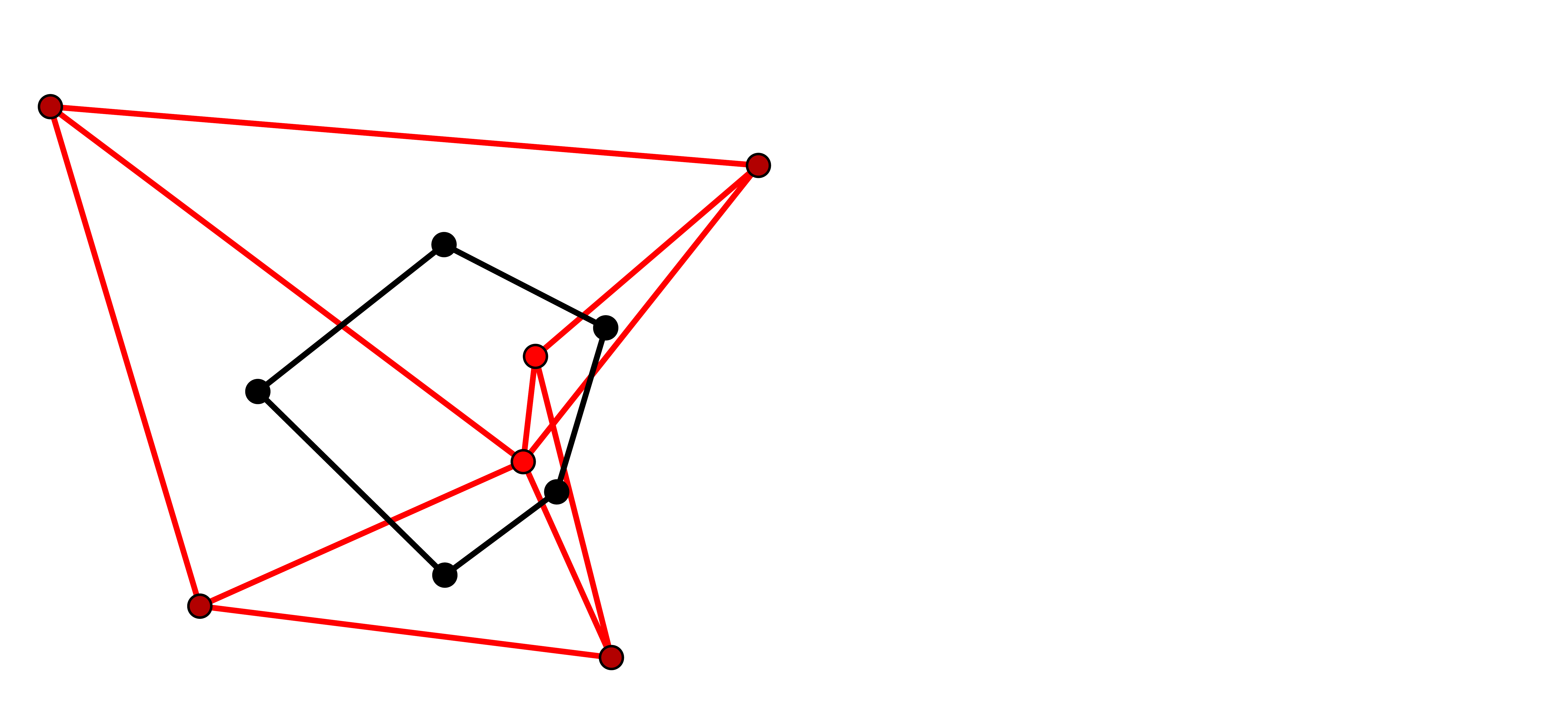} \hspace*{-3cm}
   \\

   (a)
   &
   (b)
   &
   (c)
\end{tabular}
\caption{The figure illustrates a dual graph in black with the vertices representing centroids of triangles of a triangulation. The red edges show a candidate triangulation graph constructed from the LP-solver. Diagrams (a) and (b) illustrate two valid triangulations and (c) an invalid triangulation that violates planarity.} \label{fig:centroids}
\end{figure}

\begin{theorem}\label{thm:circumcentres}
The linear program described above gives a necessary condition for the realization of the geometric TDR- and TDRS-without-holes problems in linear time with input $G^*$ given the triangulation graph with the circumcenters/centroids of the triangles of $G^*$ as vertices.
\end{theorem}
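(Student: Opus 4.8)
The plan is to establish two things separately: (i) that the orthogonality and bisection equations are genuinely \emph{necessary}, so that infeasibility of the program certifies non-realizability, and (ii) that this program, although it nominally carries two unknowns per vertex, collapses to a feasibility problem in a constant number of variables and is therefore solvable in linear time by Megiddo's algorithm. The necessity part is comparatively routine; the linear-time bound is where the real work lies.

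For necessity in the circumcenter case I would use the defining property that the circumcenter of a triangle is equidistant from its three vertices, hence lies on the perpendicular bisector of each of its edges. If $(u,v)$ is an interior edge shared by triangles $T_i,T_j$ with circumcenters $O_i,O_j$, then both $O_i$ and $O_j$ lie on the perpendicular bisector of $uv$; consequently the dual edge $O_iO_j$ \emph{is} that perpendicular bisector. This yields two relations that are linear in the unknown coordinates of $u$ and $v$ once $O_i,O_j$ are treated as known constants: the segment $uv$ is orthogonal to $O_iO_j$, and the midpoint $(u+v)/2$ lies on the line through $O_i,O_j$. Any actual triangulation realizing the given dual embedding must satisfy every such relation, so its vertex coordinates constitute a feasible point; hence the program is a necessary condition and its infeasibility rules out realization. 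For the centroid case the analogous necessary relations are $g=(a+b+c)/3$ for each triangle, equivalently the collinearity and trisection relations through the median points, which are again linear in the vertex coordinates.

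For the linear-time bound the key observation is a dimension reduction. Given the perpendicular bisector line $\ell=O_iO_j$ of an interior edge $(u,v)$, knowing one endpoint determines the other as its reflection across $\ell$, an affine map with known matrix and translation. Fixing one base vertex $v_0=(x_0,y_0)$ and propagating these reflections along a spanning tree of the graph of interior edges expresses every reachable vertex as an affine function of $(x_0,y_0)$; the remaining non-tree interior edges then impose linear equations purely in $(x_0,y_0)$. Each connected component of the interior-edge graph thus contributes an independent two-variable linear feasibility problem, and since every constraint involves only the two endpoints of a single edge, constraints never couple distinct components. Solving each component's two-dimensional program with Megiddo's algorithm in time linear in its number of constraints gives an overall linear-time procedure, rejecting the instance if any component is infeasible. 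The centroid case is handled the same way, except that propagation runs over the (connected) triangle-adjacency graph using $c=3g-a-b$ to fix each new apex from an already-placed shared edge, reducing to a constant-dimensional program.

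The main obstacle is precisely this linear-time claim rather than correctness: one must argue carefully that the apparently $\Theta(n)$-dimensional linear system decouples into independent fixed-dimension programs so that Megiddo's algorithm applies, and one must dispose of degeneracies. The notable ones are coincident circumcenters $O_i=O_j$ (four concyclic points), for which the bisector direction is undetermined and a general-position assumption is needed, and ``ear'' vertices incident to no interior edge, which appear in no linear constraint and are simply left unpinned. The latter is consistent with the statement asserting only a necessary, not sufficient, condition, as Figures~\ref{fig:circumcentres}(b) and~\ref{fig:centroids}(c) illustrate.
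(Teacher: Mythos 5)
Your proposal is correct, and its necessity half coincides with the paper's own (very brief) justification: the paper gives no displayed proof for this theorem, relying instead on the preceding paragraph, which observes that a true realization must satisfy the orthogonality/bisection (resp.\ collinearity/trisection) equations, so infeasibility certifies non-realizability. Where you genuinely depart from the paper is the linear-time claim. The paper simply asserts that the system is a ``two dimensional LP'' solvable by Megiddo's fixed-dimension algorithm, even though as written it has two unknowns \emph{per polygon vertex}, i.e.\ $\Theta(n)$ variables, to which Megiddo's linear-time bound does not directly apply. Your reflection-propagation argument --- noting that each interior dual edge forces $v$ to be the reflection of $u$ across the known line $O_iO_j$, so a spanning tree of each component of the interior-edge graph expresses all its vertices as affine functions of one seed point, leaving a genuinely two-variable (resp.\ constant-dimensional, for centroids) feasibility problem per component with no cross-component coupling --- is exactly the missing step needed to make the paper's complexity claim rigorous. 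Your handling of the degeneracies ($O_i=O_j$ requiring a general-position assumption, and ear vertices appearing in no constraint) is also more careful than anything in the paper and is consistent with the theorem claiming only necessity. One could add that since all constraints here are equalities, each component reduces to checking consistency of a small linear system, so Megiddo's machinery is not strictly needed; but invoking it does no harm and matches the paper's framing.
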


However, it is important to observe that the feasible solution by the LP only obeys orthogonality/median constraints and has no knowledge of planarity constraints of the resulting triangulation. Thus, the proposed solution might not be a realizable triangulation.
One way of resolving this problem is testing (in linear time) if the proposed solution is planar. 
If it is, we now have a realization of the triangulation. If on the other hand the solution is not planar, we cannot decide if there was not possibly
another realization that would have been planar. This is illustrated in Fig.~\ref{fig:circumcentres} and \ref{fig:centroids}, where we give different solutions to the LP constraints over the same dual triangulation graph, one leading to a feasible triangulation and the other does not.

It remains an open problem if recognition is possible under either of this models, as well as any bounds for necessary and/or
sufficient conditions under other choices for triangle representatives.

To the best of our knowledge, planarity constraints between two triangles are a disjunction of three linear constraints which leads to a third degree equation which as such cannot be resolved using the LP program. Thus full recognition of geometric graphs remains an open problem.

\paragraph{Topological TDR- and TDRS-without-holes.}
\label{subs:ttdrwoh}

There are two cases of the problem in this setting: (1) the output triangulation possibly contains interior points (TDRS-without-holes) and (2) the triangulation does not contain any interior points (TDR-without-holes).

\paragraph{TDRS-without-holes.}
\label{subsec:ttdrs-without-holes}
Let us begin with the proof of the following lemma:

\begin{lemma}
\label{lem:ttdrs-3-con}
Let $P$ be a polygon without holes, $S$ be a set of points in the interior of $P$, and $\mathcal{T}$ a triangulation of $P \cup S$. The graph $G$ of $\mathcal{T}$ is a $3$-connected maximal planar graph.
\end{lemma}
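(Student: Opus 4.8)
The plan is to prove the two assertions in turn: first that $G$ is a maximal planar graph, and then to derive $3$-connectivity from maximality together with a vertex count. Recall from Section~\ref{sec:preliminaries} that $G$ is the triangulation $\mathcal{T}$ together with the vertex $v$ at infinity joined to every vertex on the boundary $\partial P$. Since $P$ is a polygon \emph{without holes}, its boundary is a single simple closed curve, so the boundary vertices form one cycle $p_1 p_2 \cdots p_n$ in $\mathcal{T}$ and $v$ is joined to each $p_i$ exactly once; in particular $G$ is simple and $|V(G)| = n + m + 1 \ge 4$, using $n \ge 3$.

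To establish maximal planarity I would analyze the faces of a plane embedding of $G$. Fix the natural embedding in which $\mathcal{T}$ is drawn inside $P$ and $v$ is placed in the exterior face. Every bounded face lying inside $P$ is, by the definition of a triangulation, a triangle of $\mathcal{T}$. The only remaining region is the exterior of $P$, which on the sphere is a topological disk bounded by the cycle $p_1 \cdots p_n$; placing $v$ in this disk and joining it to all of the $p_i$ subdivides it into the $n$ triangles $v p_i p_{i+1}$. Hence every face of $G$ is bounded by a $3$-cycle, which is exactly the statement that $G$ is maximal planar (if desired, one may double-check this against $|E(G)| = 3|V(G)| - 6$ via a short Euler-formula count). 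The step I would treat most carefully, and the main obstacle, is arguing that the exterior really is a single disk whose boundary is one cycle: this is precisely where the hypothesis that $P$ has no holes enters, since for a polygon with holes the complement of $P$ has several boundary components and the single vertex $v$ would not triangulate all of them.

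Finally, $3$-connectivity follows from maximality. A plane triangulation on at least $4$ vertices is $3$-connected, which is classical and can simply be cited (see e.g.~\cite{cit:diestel}); combined with $|V(G)| \ge 4$ from the first paragraph this finishes the proof. If a self-contained argument is preferred, I would note that in any triangulation the neighbours of every vertex form a cycle, its \emph{link}, because the triangular faces around a vertex share edges consecutively; together with connectivity this rules out cut vertices, and a slightly more involved analysis of an alleged separating pair $\{x,y\}$—using that every edge lies on exactly two triangular faces—rules out $2$-cuts. Either route yields that $G$ is a $3$-connected maximal planar graph, as claimed.
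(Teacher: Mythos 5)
Your proof is correct, but it inverts the logical structure of the paper's argument. The paper proves $3$-connectivity directly: it first notes that the subgraph $T$ induced by the vertices of $\mathcal{T}$ is $2$-connected because its face boundaries are simple cycles, then asserts that every $2$-cut of $T$ lies on the boundary of $P$, and concludes that joining the vertex $v$ at infinity to all boundary vertices destroys these cuts; maximal planarity is then observed at the end from the fact that every face is a triangle. You instead establish maximal planarity first --- interior faces are triangles by definition of a triangulation, and the exterior is a single disk bounded by the one boundary cycle of $P$ (correctly flagging that this is where the no-holes hypothesis is used), which $v$ subdivides into triangles --- and then obtain $3$-connectivity as a corollary of the classical theorem that a maximal planar graph on at least $4$ vertices is $3$-connected. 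Your route buys rigor at low cost: the paper's key assertion that all $2$-cuts of $T$ consist of boundary vertices is stated without justification and is really the crux of its argument, whereas you outsource exactly that difficulty to a standard, citable result (and your fallback link-of-a-vertex argument covers the case where a self-contained proof is wanted). The paper's route, in exchange, makes the geometric role of $v$ in killing the boundary $2$-cuts more explicit. Both arguments are valid for the stated lemma.
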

\begin{proof}
By definition, $\mathcal{T}$ can be drawn in the plane without crossing edges. Hence, the graph (let us call it $T$) induced by the vertices of $\mathcal{T}$ is planar. Since $v$ is located outside the polygon $P$ and is not connected to any vertex in the interior of $P$, the graph $G$ is planar. As the boundary of every face in the subgraph $T$ of $G$ induced by the vertices of $\mathcal{T}$ is a simple cycle, $T$ is $2$-connected. Furthermore, every $2$-cut in $T$ is formed by the vertices on the boundary of $\mathcal{T}$. 
Hence, by adding $v$ to $T$, we obtain a $3$-connected graph $G$. This graph is maximal planar as every face (including the outer face) is a triangle. \qed
\end{proof}

We establish necessary (Lemma~\ref{lem:simple-properties}) and sufficient (Lemma~\ref{lem:dual-planar}) conditions for a graph to be a dual graph of a triangulation of a polygon with no holes.
\begin{lemma} \label{lem:simple-properties}
If $G^*$ is a dual graph of a triangulation of a polygon $P$ and set $S$ of interior points inside $P$ with no holes, then $G^*$ is a planar $3$-regular $3$-connected graph.
\end{lemma}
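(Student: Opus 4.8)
The plan is to derive all three properties of $G^*$ from the structural result already proved in Lemma~\ref{lem:ttdrs-3-con}, namely that the graph $G$ of the triangulation (including the vertex at infinity) is a $3$-connected maximal planar graph, together with the general facts about dual graphs collected in Proposition~\ref{prop:simple-properties}.

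Planarity and $3$-connectivity come almost for free. Since $G$ is planar by Lemma~\ref{lem:ttdrs-3-con}, its dual $G^*$ is planar by Proposition~\ref{prop:simple-properties}(1). Since $G$ is moreover $3$-connected, Proposition~\ref{prop:simple-properties}(3) gives that $G^*$ is $3$-connected and planar. In particular $3$-connectivity of $G^*$ guarantees that it is a simple graph, with neither loops nor parallel edges.

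The only property requiring a direct argument is $3$-regularity. First I would recall that the degree of a vertex of $G^*$ equals the number of edges of $G$ bounding the corresponding face, where an edge incident to the same face on both sides (a bridge) is counted twice. Because $G$ is $3$-connected it contains no bridge, so every edge separates two distinct faces and contributes exactly one to the degree of each of the two incident face-vertices. Since $G$ is maximal planar by Lemma~\ref{lem:ttdrs-3-con}, every face of $G$ --- including the outer face, which in the graph with the vertex at infinity is again a triangle --- is bounded by exactly three edges. Hence every vertex of $G^*$ has degree exactly $3$, and $G^*$ is $3$-regular.

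The main point to be careful about is the bridge and multiplicity bookkeeping: one must verify that no edge of $G$ is counted twice in the boundary of a single face, and that the three neighbours of each face-vertex are genuinely distinct, so that the degree is exactly $3$ rather than smaller. Both are ensured by the $3$-connectivity of $G$ (ruling out bridges) and of $G^*$ (ruling out parallel edges in the dual), which is why Lemma~\ref{lem:ttdrs-3-con} is the essential ingredient; everything else is a direct invocation of Proposition~\ref{prop:simple-properties}.
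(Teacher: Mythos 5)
Your proof is correct and follows essentially the same route as the paper's: planarity and $3$-connectivity of $G^*$ are obtained from Lemma~\ref{lem:ttdrs-3-con} together with Proposition~\ref{prop:simple-properties}, and $3$-regularity follows because every face of the maximal planar graph $G$ is a triangle. The extra care you take about bridges and edge multiplicities is a welcome refinement but does not change the argument.
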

\begin{proof}
The fact that $G^*$ is planar and $3$-connected follows from Lemma~\ref{lem:ttdrs-3-con} and Proposition~\ref{prop:simple-properties}. As the graph $G$ of the triangulation is a maximal planar graph, every face of $G$ is a triangle. Hence, every vertex in $G^*$ has precisely three incident edges.
\end{proof}

\begin{lemma} \label{lem:dual-planar}
If $G^*$ is planar, $3$-regular and $3$-connected, then $G^*$ is a dual graph of a triangulation of a polygon without holes $P$ and a set $S$ of interior points.
\end{lemma}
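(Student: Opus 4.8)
The plan is to prove sufficiency by taking the dual of $G^*$ and recognizing it as the graph of a triangulation. First I would set $G := (G^*)^*$. By Proposition~\ref{prop:simple-properties}(1) and (3), $G$ is planar and $3$-connected. Because $G^*$ is $3$-connected, its planar embedding is unique up to the choice of outer face (Proposition~\ref{prop:simple-properties}(2)), so forming the dual twice returns the original graph; in particular $(G)^* = G^*$. The key structural observation is that the degree of a vertex $f^*$ of $G^*$ equals the number of edges bounding the corresponding face $f$ of $G = (G^*)^*$. Since $G^*$ is $3$-regular, every face of $G$ is bounded by exactly three edges, so $G$ is a $3$-connected maximal planar graph, i.e.\ a triangulation of the sphere.

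Next I would exhibit $G$ as the graph of a triangulation by designating a point at infinity. I would pick an arbitrary vertex $v$ of $G$ to play that role. In a triangulation of the sphere the neighbors of $v$ form a cycle $C$ (the link of $v$), because the triangular faces incident to $v$ close up into a wheel around it and consecutive neighbors are therefore adjacent. Deleting $v$ together with its incident edges merges exactly the faces of the star of $v$ into a single face bounded by $C$, while every remaining face stays a triangle. As $G$ is $3$-connected, $G - v$ is $2$-connected, so this outer face is bounded by the simple cycle $C$. Thus $G - v$ is, combinatorially, a triangulation whose outer boundary is $C$, with the non-neighbors of $v$ lying inside.

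To turn this combinatorial object into an actual triangulation of a simple polygon, I would appeal to F\'{a}ry's theorem (or Tutte's barycentric embedding) to obtain a straight-line planar drawing of $G - v$ in which $C$ bounds the outer face. In such a drawing $C$ is realized as a simple polygon $P$ (the outer boundary of a straight-line planar embedding is always simple), the non-neighbors of $v$ become interior points $S$ lying strictly inside $P$, and the inner triangular faces constitute a triangulation $\mathcal{T}$ of $P \cup S$ with no holes. Re-inserting $v$ at infinity and joining it to every vertex of $C$ recovers $G$ as the graph of $\mathcal{T}$ in the sense of Section~\ref{sec:preliminaries}. Consequently $G^* = (G)^*$ is, by definition, the dual graph of the triangulation $\mathcal{T}$, which proves the claim.

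I expect the main obstacle to be the passage from the purely combinatorial triangulation $G - v$ to a genuine geometric triangulation of a simple polygon: one must guarantee that the boundary cycle $C$ can be drawn as a non-self-intersecting polygon enclosing all interior points while keeping every inner face a straight triangle. This is exactly what F\'{a}ry's theorem provides, so the difficulty dissolves into a citation; the remaining ingredients (the double-dual identity and the triangle-face characterization of $3$-regularity) are routine consequences of Proposition~\ref{prop:simple-properties}.
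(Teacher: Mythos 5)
Your proposal is correct and follows essentially the same route as the paper's proof: dualize $G^*$ to get a $3$-connected maximal planar graph $G$, take a straight-line embedding, delete the vertex playing the role of the point at infinity, and observe that the $2$-connected remainder is a triangulation of the polygon bounded by that vertex's link. You supply some details the paper leaves implicit (that $3$-regularity of $G^*$ forces every face of $G$ to be a triangle, and that the link of $v$ is a simple cycle bounding the new outer face), but the argument is the same.
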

\begin{proof}
By Proposition~\ref{prop:simple-properties}(3), $G^*$ has a dual graph $G$ which is $3$-connected, and thus $G$ can be uniquely embedded in the plane up to the selection of the outer face (Proposition~\ref{prop:simple-properties}(2)). Such an embedding can be achieved using straight lines only (see e.g.~\cite{cit:deFra}). Now, remove the vertex $v$ of $G$ which represents the outer face of $G^*$. As $G$ is $3$-connected, by removing $v$, we obtain a $2$-connected plane graph $G'$. Hence, every face of $G'$ is a simple cycle, and it is a triangulation of a polygon formed by its outer face. Moreover, since the graph $G^*$ is 2-connected and every face is a triangle, it is the triangulation of a polygon.
\end{proof}

\begin{theorem}\label{thm:tdrs-no-holes}
The answer to the topological TDRS-without-holes problem is ``yes'' if and only if the input $G^*$ is a $3$-connected $3$-regular planar graph. Furthermore, such a polygon can be constructed in linear time. 
\end{theorem}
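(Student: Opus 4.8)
The plan is to observe that the claimed equivalence is already packaged in the two preceding lemmas, so that essentially no new structural work is required, and then to account for the stated linear-time bound by assembling a short pipeline of known linear-time graph algorithms together with the constructive argument of Lemma~\ref{lem:dual-planar}. Concretely, the ``only if'' direction is exactly Lemma~\ref{lem:simple-properties}: if $G^*$ is the dual of a triangulation of a hole-free polygon with possible interior points, then $G^*$ must be planar, $3$-regular and $3$-connected. The ``if'' direction is exactly Lemma~\ref{lem:dual-planar}: any planar $3$-regular $3$-connected graph is realizable as such a dual. Taking the two lemmas together yields the biconditional immediately, so the only remaining obligations are the running time of the decision procedure and the explicit construction of the polygon.

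For the decision procedure I would run three tests. First, test planarity; the Hopcroft--Tarjan algorithm does this in linear time and, in the positive case, returns a combinatorial (rotation-system) embedding of $G^*$. Second, test $3$-regularity, which is a trivial linear scan of the degree sequence. Third, test $3$-connectivity, which can also be carried out in linear time. If any test fails, the algorithm reports ``no'' by the contrapositive of Lemma~\ref{lem:simple-properties}; if all three succeed, the answer is ``yes'' by Lemma~\ref{lem:dual-planar}.

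For the construction I would follow the proof of Lemma~\ref{lem:dual-planar} verbatim while tracking the cost of each step. From the embedding of $G^*$ computed above, the dual $G=(G^*)^*$ is built in linear time by walking the faces of $G^*$; since $G^*$ is $3$-connected its embedding, and hence $G$, is determined up to the choice of outer face, so any choice is legitimate. I then identify the vertex $v$ of $G$ corresponding to the chosen outer face, delete it to obtain $G'$, and compute a straight-line planar drawing of $G'$ in linear time~\cite{cit:deFra}; the outer boundary of this drawing is the desired polygon $P$ and the remaining edges form its triangulation.

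The step I expect to need the most care is not the topology but the end-to-end bookkeeping that keeps every phase linear, together with the verification that deleting $v$ genuinely leaves a valid polygon triangulation. The latter is precisely where the $3$-connectivity and maximal planarity of $G$ (Lemma~\ref{lem:ttdrs-3-con}) are used: they guarantee that the faces incident to $v$ merge into a single simple cycle bounding $P$, so that $G'$ is $2$-connected with every bounded face a triangle. Among the algorithmic ingredients, the linear-time $3$-connectivity test is the least elementary, so if a more self-contained argument were wanted, that is the component I would either spell out in detail or replace by a direct certificate read off from the computed embedding.
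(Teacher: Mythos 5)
Your proof matches the paper's: the biconditional is read off directly from Lemmas~\ref{lem:simple-properties} and~\ref{lem:dual-planar}, and the linear-time bound is obtained from the same pipeline (linear planarity/$3$-regularity/$3$-connectivity tests via~\cite{cit:hopcroft}, dual construction in linear time by Euler's formula, the straight-line embedding of~\cite{cit:deFra}, and constant-degree vertex deletion). The one nuance worth keeping from the paper's ordering is that the embedding should be fixed for the $3$-connected graph $G$ \emph{before} deleting $v$, since re-embedding the merely $2$-connected $G'$ from scratch could in principle change the face structure and hence the dual being realized.
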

\begin{proof}
The first part of the claim follows directly from Lemmas~\ref{lem:simple-properties} and~\ref{lem:dual-planar}. The linear running time follows from linearity of 
verifying $3$-connectivity of a graph~\cite{cit:hopcroft}. The reconstruction is linear as the number of faces in any planar graph is linear due to Euler's formula, so the dual graph can be constructed in linear time. The straight line embedding can be found in linear time as well~\cite{cit:deFra}, and deleting a vertex from an embedded graph takes at most $\mathcal{O}(n)$ steps too.
\end{proof}

\paragraph{TDR-without-holes.}
\label{subsec:ttdr-without-holes}

Previously, we showed that topological TDRS-without-holes problem can be solved in linear time. This can be done even if the set of interior points $S$ is required to be empty (i.e., the graph of the triangulation should consist only of vertices at the boundary of the polygon $P$, and one vertex outside $P$). 

\begin{proposition}\label{lem:pol-no-holes-Steiner}
Let $G^*$ be a $3$-regular planar graph and $\widetilde{G^*}$ the subgraph of $G^*$ obtained by removing the vertices of the outer face.
$\widetilde{G^*}$ is a tree if and only if it corresponds to a polygon with no holes or interior points.
\end{proposition}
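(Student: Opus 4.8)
The plan is to reinterpret $\widetilde{G^*}$ concretely and then read off whether it is a tree from a single Euler-characteristic count. Regarding $G^*$ as the dual of a triangulation $\mathcal{T}$ of a polygonal region $P$ (possibly with holes) together with a set $S$ of $m$ interior points, as guaranteed by Lemma~\ref{lem:dual-planar} when $G^*$ is also $3$-connected, its vertices are the faces of $G$. The outer face of $G^*$ is the one dual to the vertex $v$ at infinity, and the vertices lying on it are exactly the triangles of $G$ incident to $v$, i.e.\ the ``outer'' triangles sitting across the boundary edges of $P$. Hence $\widetilde{G^*}$ is spanned by the remaining faces, namely the triangles of $\mathcal{T}$ together with the hole faces; because $G^*$ is $3$-regular, every hole face is bounded by exactly three edges, so every hole is a triangle. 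Its edges are precisely the adjacencies across interior diagonals of $\mathcal{T}$ and across hole-boundary edges, so $\widetilde{G^*}$ is the weak dual of $\mathcal{T}$ augmented by the hole faces.

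Next I would record two structural facts. First, $\widetilde{G^*}$ is connected: any two triangles are joined by a generic curve inside the connected region $P$ that avoids the vertices of $\mathcal{T}$, and each crossing of an edge steps to an adjacent face, yielding a walk in $\widetilde{G^*}$; every hole face is adjacent to the triangles around it and so attaches to this component. Second, because $\widetilde{G^*}$ is connected, it is a tree if and only if its cycle rank $|E(\widetilde{G^*})|-|V(\widetilde{G^*})|+1$ vanishes.

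The computational heart is to evaluate this cycle rank. Let $t$ be the number of triangles of $\mathcal{T}$, $h$ the number of holes, $b$ the number of outer-boundary edges, and $d$ the number of interior diagonals. Applying Euler's formula $V-E+F=2$ to the plane graph $\mathcal{T}$ — whose faces are the $t$ triangles, the $h$ hole faces and the outer face, and whose vertices split into $b$ outer-boundary vertices, $3h$ hole vertices and $m$ interior points, while its edges are $d$ diagonals, $b$ outer-boundary edges and $3h$ hole edges — the boundary counts cancel and give $d = m + t + h - 1$. Since $\widetilde{G^*}$ has $t+h$ vertices and $d+3h$ edges, its cycle rank equals $(d+3h)-(t+h)+1 = m + 3h$. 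Therefore $\widetilde{G^*}$ is a tree precisely when $m + 3h = 0$, i.e.\ when $S=\emptyset$ and $P$ has no holes, which is the claim.

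The main obstacle, and the point I would treat most carefully, is the bookkeeping around the hole faces: one must notice that $3$-regularity forces holes to be triangular, that these hole faces survive in $\widetilde{G^*}$ as degree-$3$ vertices (they are not incident to the outer face), and that they contribute to both the vertex and the edge counts. A useful sanity check is the single triangular hole ($m=0$, $h=1$), where the three sectors around the hole each close up through the hole vertex to give cycle rank $3$; this confirms that any hole, like any interior point, genuinely destroys the tree property, matching the direct intuition that the triangles encircling an interior point or a hole always form a cycle.
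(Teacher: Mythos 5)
Your Euler-characteristic computation is correct, and for one half of the equivalence it is a genuinely different and more informative route than the paper's: where the paper simply cites the textbook fact that the weak dual of a triangulation of a simple polygon is a tree, your identity (cycle rank of $\widetilde{G^*}$ equals $m+3h$) both proves that fact and quantifies exactly how each interior point or hole creates an independent cycle. One convention mismatch is worth flagging: in this paper the hole faces are \emph{deleted} from the dual (see the proof of Proposition~\ref{prop:degree2}), so a polygon with a hole produces degree-$2$ vertices and $3$-regularity already forces $h=0$; your bookkeeping with triangular hole faces is internally consistent but is not the paper's model.

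The genuine gap is in the forward direction. The proposition asserts that if $\widetilde{G^*}$ is a tree then $G^*$ \emph{does} correspond to a polygon with no holes or interior points, whereas your argument only shows that \emph{if} $G^*$ corresponds to some polygonal region, \emph{then} that region must have $m=h=0$. To have a realization to count in at all you invoke Lemma~\ref{lem:dual-planar}, which requires $3$-connectivity --- a hypothesis the proposition does not grant you (it assumes only $3$-regular and planar). A counting argument cannot manufacture the realization; that existence claim is exactly what the paper's proof supplies by induction on the tree (root it at a leaf, realize the root as an ear triangle, and attach a new triangle across a free boundary edge for each of the at most two children of a node). Either add $3$-connectivity to your hypotheses explicitly --- defensible in context, since the proposition is applied to inputs that have already passed the test of Theorem~\ref{thm:tdrs-no-holes} --- or supply the direct construction from the tree; as written, the forward implication is not established.
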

\begin{proof}
All the outer vertices of $G^*$ correspond to faces introduced by the point at infinity vertex and all the interior vertices of $G^*$ correspond to the faces
of the triangulation of a polygon.

\noindent $(\Leftarrow)$
The dual graph of a triangulation of a polygon without holes or interior points is a tree~\cite{cit:deBerg}~(p. 48).

\noindent $(\Rightarrow)$
It is a simple exercise to show by induction that every tree of degree at most 3 is the dual of a triangulation of a polygon without holes.
\qed
\end{proof}

\paragraph{Combinatorial TDR- and TDRS-without-holes.}\label{subs:ctdrwoh}


It is easy to see that the topological and combinatorial input are equivalent in this case. 
Since $G^*$ must be $3$-connected and $3$-regular, the algorithm can first verify this necessary condition. If it is satisfied, it can construct the embedding (e.g., applying the linear straight line embedding algorithm of~\cite{cit:deFra}) and proceed with the topological input.

\begin{theorem}\label{thm:comb-tdr-no-holes}
The answer to combinatorial TDR- and TDRS-without-holes problems is affirmative if and only if the input $G^*$ is a $3$-connected $3$-regular planar graph.
Furthermore, such a polygon can be constructed in linear time.
\end{theorem}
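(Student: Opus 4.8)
The plan is to reduce the combinatorial problem to the topological one already settled in Theorem~\ref{thm:tdrs-no-holes}, leveraging the fact that a $3$-connected planar graph has an essentially unique embedding. For the ``only if'' direction I would simply invoke Lemma~\ref{lem:simple-properties}: any affirmative instance is the dual of a triangulation of a polygon with no holes, and every such dual is necessarily planar, $3$-regular, and $3$-connected, so no new argument is needed.

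For the ``if'' direction, I would start from $G^*$ given by its adjacency information alone, assumed $3$-connected, $3$-regular, and planar. The crucial observation is Whitney's theorem (Proposition~\ref{prop:simple-properties}(2)): the embedding of a $3$-connected planar graph is unique up to the choice of outer face. Hence the combinatorial data already determines the face structure of $G^*$, which is precisely the topological input; I can recover the faces directly from the adjacency matrix by computing any planar embedding. With the embedding in hand the instance becomes identical to a topological TDRS-without-holes instance, and Lemma~\ref{lem:dual-planar} (equivalently Theorem~\ref{thm:tdrs-no-holes}) certifies that $G^*$ is the dual of a triangulation of a polygon without holes.

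The algorithm would then proceed in stages: verify $3$-regularity (trivially linear) and $3$-connectivity (linear via Hopcroft--Tarjan~\cite{cit:hopcroft}); compute a planar straight-line embedding in linear time~\cite{cit:deFra}, thereby fixing the faces; and finally run the linear-time reconstruction of Theorem~\ref{thm:tdrs-no-holes}. Since each stage is linear, the overall running time is linear.

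I expect the only real subtlety to be justifying that the combinatorial and topological inputs are genuinely interchangeable in this setting. This rests entirely on the uniqueness of the embedding for $3$-connected planar graphs; for a merely $2$-connected graph, distinct embeddings could produce distinct face structures and the reduction would break down. Because $3$-connectivity is a necessary condition for an affirmative answer (Lemma~\ref{lem:simple-properties}), the algorithm may safely rely on it once the verification step passes, so the reduction goes through without obstruction.
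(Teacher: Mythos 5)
Your proposal matches the paper's argument: the paper likewise verifies $3$-regularity and $3$-connectivity, invokes the uniqueness of the embedding of a $3$-connected planar graph to equate the combinatorial and topological inputs, computes the embedding in linear time via~\cite{cit:deFra}, and then defers to Theorem~\ref{thm:tdrs-no-holes}. No substantive difference.
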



\paragraph{Topological TDR- and TDRS-with-known-holes.}\label{subs:ttdrwh}

Let us start with the following observation:

\begin{proposition}
\label{prop:degree2}
If a polygon has a hole, the dual graph $G^*$ of its triangulation contains vertices of degree $2$ or less. 
\end{proposition}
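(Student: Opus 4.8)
The plan is to play the with-hole case off against the hole-free case. By Lemma~\ref{lem:simple-properties} the dual of a triangulation of a hole-free polygon is exactly $3$-regular, and the reason is structural: the vertex $v$ at infinity is joined to \emph{every} vertex of the outer boundary, so it triangulates the unbounded region and turns each outer boundary edge into an edge shared by two genuine triangles. First I would stress this role of $v$, since it is precisely what eliminates all low-degree vertices in the hole-free setting: no edge of $\mathcal{T}$ is left with an empty, non-triangular region on one of its two sides, so every face of $G$ contributes a degree-$3$ vertex to $G^*$.

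Next I would isolate a single hole and its bounding cycle $C$. Because a hole is bounded by a simple polygon, $C$ consists of $k \ge 3$ edges, and no point at infinity is placed inside the hole, so the hole stays an empty region of $G$ rather than being subdivided into triangles. Each edge $e$ of $C$ is incident, on the triangulated side, to exactly one triangle $t_e$ of $\mathcal{T}$, while on the other side it borders only the empty hole. In the triangulation dual this edge therefore contributes no neighbour to $t_e$: the (at most) three neighbours of a triangle correspond to its three edges, and across $e$ there is no triangle to be adjacent to. Hence $t_e$ has at most two neighbours, i.e.\ degree at most $2$ in $G^*$, and since $C$ has at least three edges such a triangle certainly exists. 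This already produces the claimed vertex of degree $2$ or less, uniformly for holes of every shape.

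Finally I would tidy up the incidence bookkeeping. A single triangle may carry two edges of $C$ (this occurs exactly at a reflex vertex of the hole), in which case its degree only drops further, reinforcing the bound; a triangle cannot carry all three edges of $C$, as it would then coincide with the empty hole. The remaining edges of $t_e$ are either interior edges or outer-boundary edges, and both are shared with an actual triangle (in the latter case with one of the exterior triangles manufactured by $v$), so they contribute to the degree exactly as in the hole-free count. The point that needs the most care, and which I expect to be the main obstacle, is settling the status of the empty hole itself: one must argue that the hole is not to be regarded as an ordinary triangular face, so that the adjacency across each edge of $C$ is genuinely absent rather than leading to a degree-$3$ neighbour. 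Once this is pinned down, the degree count is immediate and independent of whether the hole is convex.
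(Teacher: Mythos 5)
Your proof is correct, but it takes a more direct route than the paper's. The paper's own argument first triangulates the hole interiors (treating the hole-boundary vertices as interior points of the polygon), invokes the hole-free case to conclude that the resulting dual is $3$-regular and $3$-connected, and then deletes the vertices corresponding to the faces inside the holes; the surviving neighbours of the deleted vertices are the witnesses of degree at most $2$. You instead count locally: a triangle carrying an edge of the hole's bounding cycle $C$ has no dual neighbour across that edge, so its degree in $G^*$ is at most $2$, and since $C$ has $k \ge 3$ edges such a triangle exists. The two arguments identify the same witnesses (the triangles incident to hole-boundary edges), but yours avoids the auxiliary completion-and-deletion step and does not rely on the connectivity assertion that the paper states without justification (and which is not actually needed for the conclusion). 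You also correctly isolate the one delicate point: the convention that the hole is not itself a face-vertex of $G^*$. This is precisely what the paper's proof enforces by explicitly removing the vertices corresponding to the faces inside the holes, so your reading of the definition is consistent with the paper's, and your degree count goes through.
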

\begin{proof}
If we triangulate the polygon together with its holes (treating the vertices at the boundary of the hole as interior points), we obtain a 3-regular 3-connected dual graph. We now construct the dual graph $G^*$ of the triangulation and remove the vertices that correspond to the faces inside the holes of the polygon. The remaining graph is connected and has at least one vertex of degree 2 or less.
\qed
\end{proof}

From the proof of Proposition~\ref{prop:degree2}, we can see that vertices of degree 2 in $G^*$ are adjacent to holes in the initial polygon.
Observe that if $P$ is a polygon with or without holes, the triangles of the graph $G$ of a triangulation created by the point at infinity and the outer face of the polygon form a 3-connected graph. Thus, the dual graph $G^*$ cannot contain a 2-cut on the outer face corresponding to these triangles.

We can associate each degree 2 vertex to its adjacent hole. Formally, let $G^*$ be a planar graph that contains at least one vertex of degree 2. We define an \emph{assignment} of a vertex $u$ of degree $2$ to a face of $G^*$, as a mapping $\mathcal H$ from the set containing $u$ to the set of faces incident to $u$ of $G^*$, such that if $u$ is incident to faces $F$ and $F'$ in $G^*$, then 
 $\mathcal{H}(u) \in \{F, F'\}$.
The same way we can define: an \emph{empty assignment}, which does not assign any vertex of degree 2 to a face of $G^*$; a \emph{partial assignment}, which assigns a subset of vertices of degree 2 to their incident faces in $G^*$ and a \emph{total assignment} which assigns all the vertices of degree 2 to faces of $G^*$ (see Fig.~\ref{fig:assignment} for a total assignment example).

\begin{lemma} \label{lem:deg2=hole}
Let $\{G^*,\mathcal H\}$ be such that $G^*$ is a dual graph of a triangulation of a polygon with holes. A face that is assigned vertices of degree $2$ contains a hole in the initial polygon. Moreover, $\mathcal H$ assigns to each face of $G^*$ zero or at least three vertices of degree $2$.
\end{lemma}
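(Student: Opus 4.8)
The plan is to read the assignment off a fixed realization. Since $G^*$ is the dual of a triangulation $\mathcal{T}$ of a polygon $P$ with holes, recall from the proof of Proposition~\ref{prop:degree2} that $G^*$ is obtained from the $3$-regular dual of the fully triangulated $P$ by deleting the vertices sitting inside the holes; consequently the vertices of degree at most $2$ are exactly the duals of the triangles of $\mathcal{T}$ that carry at least one edge on the boundary of some hole, while every other vertex keeps degree $3$. Each hole $H$ is left untriangulated and therefore survives as a single face $F_H$ of $G^*$, and the triangles incident to $H$ are precisely the low-degree vertices on the boundary of $F_H$. Since the holes are pairwise disjoint (no shared boundary vertices), each such triangle touches exactly one hole, so the realization induces a well-defined total assignment $\mathcal{H}$ that sends every low-degree vertex to the face $F_H$ of the hole it borders; this is the $\mathcal{H}$ I will analyze.

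First I would dispatch the containment claim. If $\mathcal{H}(u)=F$ for a low-degree vertex $u$, then $u$ is the dual of a triangle bordering a hole $H$ and $F=F_H$ is the face occupying the region vacated by $H$; since $F_H$ is exactly the untriangulated hole region, it contains $H$. Hence every face in the image of $\mathcal{H}$ contains a hole, which is the first assertion. The same description shows that a face not arising from a hole receives no vertices under $\mathcal{H}$, so such faces are assigned $0$, settling one alternative of the second assertion.

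The remaining alternative reduces to showing that each hole face $F_H$ receives at least three vertices, i.e.\ that the number $k$ of distinct triangles of $\mathcal{T}$ bordering $H$ is at least $3$. The counting setup is: if $H$ has $h\ge 3$ boundary edges and a triangle $t$ borders $H$ on $d_t$ of them, then $\sum_t d_t=h$ with each $d_t\in\{1,2\}$. The crucial geometric observation is that $d_t=2$ can occur only when the two shared edges are the two edges of $H$ meeting at a common vertex $p$ and $p$ is \emph{reflex} in $H$: the triangle lies in the polygon, hence outside $H$, so it fills the angular wedge at $p$ complementary to the interior of $H$, forcing the interior angle of $H$ at $p$ to exceed $\pi$ (a triangle cannot instead share two non-adjacent edges of $H$, as that would require four distinct vertices).

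Finally I would derive $k\ge 3$ by contradiction. If $k\le 2$ then $h=\sum_t d_t\le 2k\le 4$, so $h\in\{3,4\}$. For $h=3$ the hole is a triangle, hence convex, so no $d_t$ equals $2$ and $k=h=3$, a contradiction; for $h=4$, forcing $k\le 2$ would require $d_t=2$ for both bordering triangles, i.e.\ two distinct reflex vertices of a simple quadrilateral, which is impossible because its interior angles sum to $2\pi$ and at most one can exceed $\pi$. Thus $k\ge 3$. The main obstacle is precisely this last step: the degrees of the bordering vertices can drop below $2$ at reflex hole vertices, so one must charge hole edges to bordering triangles with care and invoke the angle-sum constraint of a simple polygon to rule out holes bordered by only one or two triangles.
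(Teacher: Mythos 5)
Your proof is correct, and it reaches the two claims by a route that differs from the paper's in an instructive way. The paper argues through the primal--dual correspondence: a face of $G^*$ ordinarily corresponds to a single vertex of $G$, a degree-$2$ vertex assigned to a face forces \emph{two} vertices of $G$ into that face, and a face holding more than one primal vertex must contain a hole; for the cardinality bound it then charges one hole edge to each assigned degree-$2$ vertex and invokes ``a hole has length at least $3$.'' You instead read everything off a fixed realization, which handles the ``contains a hole'' part just as cleanly, but your treatment of the ``at least three'' part is genuinely sharper: the paper's one-edge-per-vertex charging implicitly assumes every hole-bordering triangle meets the hole in exactly one edge, whereas you explicitly allow a triangle to meet the hole in two edges ($d_t=2$), observe that this forces a reflex vertex of the hole, and then kill the cases $k\le 2$ with the angle-sum bound for simple polygons (a triangle has no reflex vertex; a quadrilateral has at most one). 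This closes a case the paper's proof silently skips, namely bordering triangles whose dual vertices have degree $1$ rather than $2$. Two small caveats: your conclusion is literally ``at least three bordering triangles,'' i.e.\ at least three vertices of degree \emph{at most} $2$, which matches the paper's intent (compare Proposition~\ref{prop:degree2}, ``degree $2$ or less'') but not the lemma's strict wording --- a looseness the paper's own proof shares; and, like the paper, you must read $\mathcal H$ as the assignment induced by the realization (the paper's later notion of a \emph{valid} assignment), since for an arbitrary assignment the first claim would fail.
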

\begin{proof}
Since $G^*$ is a dual graph of a triangulation of a polygon with holes, the vertices on the outer face have degree 3.
Let $u^*$ be a vertex of degree 2, thus $u^*$ is an interior vertex. Let the assignment $\mathcal H$ assign $u^*$ to a face containing~$u^*$.
Recall that every vertex $v^*$ in $G^*$ corresponds to a face $F$ in the initial triangulation graph $G$, every face $F^*$ in  $G^*$ corresponds to a vertex $v$ in $G$ and every edge $e^*$ in $G^*$ corresponds to an edge $e$ crossing $e^*$ in $G$.
Since $u^*$ in $G^*$ has the degree 2, then only two different (topological) scenarios are possible: the face, which is a triangle, corresponding to $u^*$ in $G$ has one vertex in one face of $G^*$ containing $u^*$ and two vertices in the other face of $G^*$ containing $u^*$ or the other way around (see Fig. \ref{fig:2cases}).

\begin{figure}[h]\centering
\begin{tabular}{cccc}
   \includegraphics[width=0.35\textwidth]{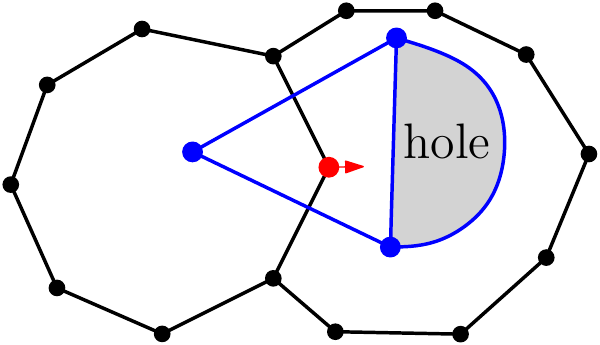}\hspace{0.3cm}
   &
   \includegraphics[width=0.35\textwidth]{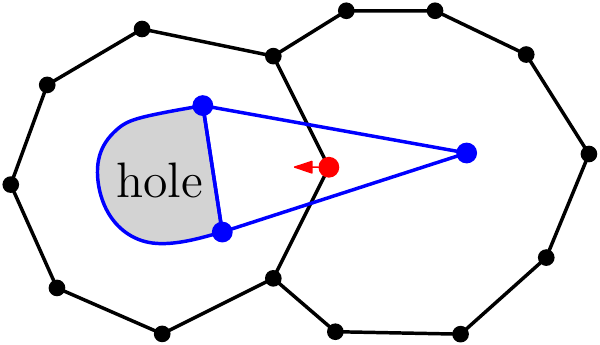}\hspace{0.3cm}
   \\
   (a)\hspace{0.3cm}
   &
   (b)\hspace{0.3cm}
\end{tabular}
\caption{Two possible cases of reconstruction of the initial graph of the triangulation.} \label{fig:2cases}
\end{figure}

Which of these two cases is the right one is given by the assignment $\mathcal H$ by assigning $u^*$ to exactly one of the two faces containing $u^*$. If $u^*$ is assigned to a face in $G^*$ then in $G$ the triangle $U$ corresponding to $u^*$ has the two vertices in that face of $G^*$.
Since we know that every face of $G^*$ corresponds to one vertex in $G$ and in our case the face of $G^*$ to which was assigned $u^*$ has two vertices of $G$, this implies that this face contains a hole in the initial polygon.

A hole has length at least 3. Since one vertex of degree 2 assigned to a face induces one edge of the hole, then for $G^*$ to be the dual graph of a triangulation of a polygon with holes, $F$ needs to be assigned at least three vertices of degree 2.
\qed
\end{proof}


We know that the presence of a vertex of degree 2 in the graph $G^*$ means there is a hole in the output polygon in one of the faces incident to this vertex in $G^*$. The reason to define an assignment of vertices of degree 2 to faces of the graph is to establish in which of the two incident faces the hole is contained (see Fig.~\ref{fig:2cases}).
We call $\mathcal H$ a \emph{valid assignment} if we can realize $G^*$ as a triangulation dual of a polygon with holes.
A polygon $P$ with holes is a realization of $\{G^*,\mathcal H\}$ if the polygon is a realization of $G^*$ and $\mathcal H$ is a valid assignment with respect to~$P$.

Let us now focus on the one-edge cuts in $G^*$, such as the one shown in Fig.~\ref{fig:edge-cuts}(a). These one-edge cuts in the dual represent edges in the original polygon where a straight line cut applied to that
edge would separate the polygon into two disjoint subpolygons. The two possible cases of how this separation looks like are illustrated in Fig.~\ref{fig:edge-cuts}(b) and (c).

\begin{figure}[h]\centering
\begin{tabular}{ccc}
\includegraphics[width=0.3\textwidth]{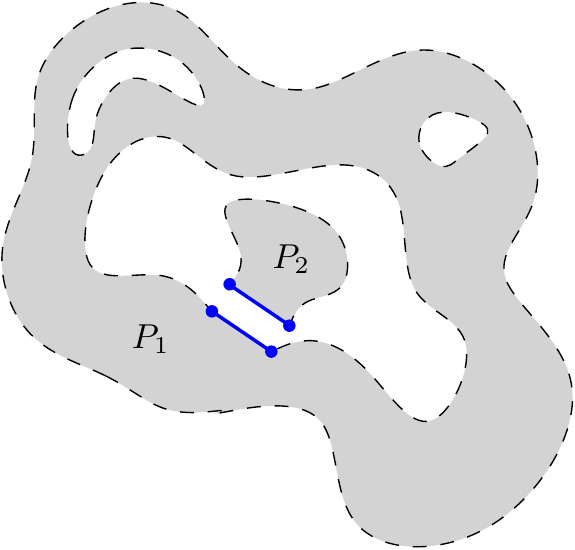}
&
~~~~~~~~~~~~~~
&
\includegraphics[width=0.4\textwidth]{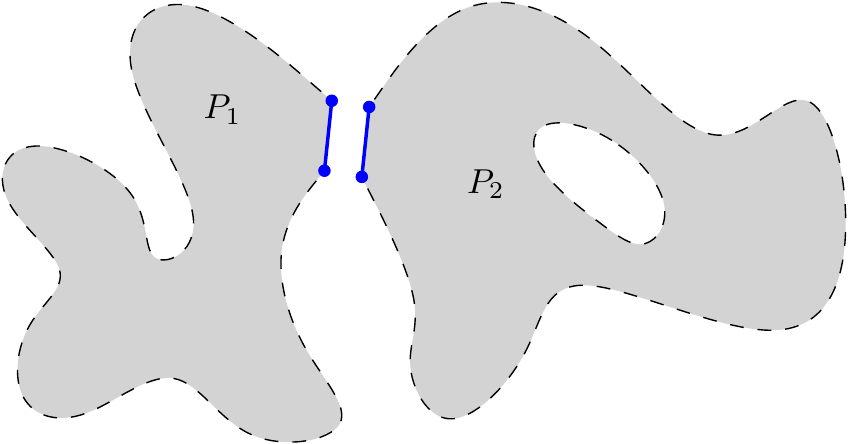} \\
(a)
&
~~~~~~~~~~~~~~
&
(b)
\end{tabular}
\caption{Two possible realizations of two components of an one-edge cut of $G^*$.}
\label{fig:edge-cuts}
\end{figure}

Now we observe that when the first such cut is applied, the case shown in Fig.~\ref{fig:edge-cuts}(c) is not possible since the outer face is 3-connected due to the point at infinity and the upper and lower chain of the original polygon. So in what follows, we need only consider case (b) in the figure.
Let $G_1^*$ and $G_2^*$ denote the subgraph duals of $P_1$ and $P_2$, respectively in $G^*$.
The algorithm now recursively creates a topological embedding for $P_1$ and $P_2$ and merges the two embeddings.
We can show that this process is deterministic and results in a unique topological graph which can be embedded using straight line edges.
This resulting polygonal graph is a simple polygon with point at infinity if and only if $G^*$ is a triangulation dual of a simple polygon.
Hence, we have the following theorem.

\begin{theorem}\label{thm:tdrwkh}
Given an input $\{G^*, \mathcal{H}\}$, the topological TDR- and TDRS-with-known-holes problems are decidable in linear time.
\end{theorem}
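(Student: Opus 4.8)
The plan is to exhibit a single deterministic reconstruction procedure that, from the input $\{G^*,\mathcal H\}$, attempts to build the triangulation graph $G$ together with a straight-line realization, and then to argue that this procedure succeeds exactly when the instance is realizable, all within a linear time budget. The decision is affirmative precisely when the reconstruction terminates with a valid simple polygon with holes, since on the one hand any realization induces exactly the reconstructed graph, and on the other hand a successful reconstruction exhibits a witnessing polygon.

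First I would check the combinatorial conditions that every realizable instance must satisfy, as supplied by the preceding results. By Lemma~\ref{lem:simple-properties} applied to the hole-free backbone and by Proposition~\ref{prop:degree2}, a realizable $G^*$ is planar with maximum degree $3$; its outer face carries no $2$-cut, because the vertex at infinity together with the upper and lower chains of $P$ form a $3$-connected subgraph; and by Lemma~\ref{lem:deg2=hole} the assignment $\mathcal H$ must send to each face either zero or at least three vertices of degree $2$. Each test is linear, using the planarity and connectivity machinery of~\cite{cit:hopcroft}, so if any fails we immediately answer \emph{no}.

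Next I would reconstruct $G$ through the vertex--face--edge correspondence between $G$ and $G^*$. Every degree-$3$ vertex of $G^*$ forces its dual triangle to place exactly one corner in each of the three incident faces, so its local structure is determined. Every degree-$2$ vertex $u^*$ is a priori ambiguous between the two scenarios of Fig.~\ref{fig:2cases}, but $\mathcal H(u^*)$ selects which incident face receives the two corners of the corresponding triangle; by Lemma~\ref{lem:deg2=hole} that face then carries a hole, and the at least three degree-$2$ vertices assigned to it trace a boundary cycle of length at least $3$. I would then handle the one-edge cuts: each bridge of $G^*$ marks a separating edge of $P$, and by the observation preceding the theorem the configuration of Fig.~\ref{fig:edge-cuts}(c) is excluded by the $3$-connectivity of the outer face, leaving only case (b). The two sides $G_1^*$ and $G_2^*$ of a bridge are therefore reassembled in a forced way; recursing on each side and merging yields a single topological graph. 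Having a unique combinatorial embedding, I would realize it with straight edges via~\cite{cit:deFra}, delete the vertex at infinity, carve out the faces designated as holes by $\mathcal H$, and verify that the result is a genuine simple polygon with holes, namely that the outer boundary and each hole boundary is a simple cycle and every bounded non-hole face is a triangle. Interior points of the TDRS variant appear as vertices of $G$ completely surrounded by triangles and create no degree-$2$ vertices, so they require no special treatment.

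The step I expect to be the main obstacle is establishing that this reconstruction is globally consistent and deterministic: that the forced local choices at degree-$3$ vertices, the hole orientations dictated by $\mathcal H$, and the bridge merges never conflict and together determine one planar structure free of self-intersection. Once uniqueness is secured, linearity is routine: by Euler's formula $G^*$ has $O(n)$ vertices, edges, and faces; the condition checks, the local reconstruction, and the straight-line embedding are each linear; and the bridge recursion does linear total work because its subproblems partition the edges of $G^*$.
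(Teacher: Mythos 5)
Your proposal follows essentially the same route as the paper's proof: decomposition along one-edge cuts (with the configuration of Fig.~\ref{fig:edge-cuts}(c) excluded by the $3$-connectivity of the outer face), forced local reconstruction of triangles from degree-$3$ vertices and $\mathcal{H}$-assigned degree-$2$ vertices, recursive merging of the two sides of each cut, a straight-line embedding, and a final verification pass, all in linear time via Euler's formula. The step you flag as the main obstacle --- that the merge is deterministic and globally consistent --- is precisely the step the paper also asserts rather than argues in detail, so your account is faithful to, and no less complete than, the published proof.
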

\begin{proof}
Recall that the algorithm partitions the dual graph $G^*$ along a one-edge cut. In general, the algorithm processes each of the one-edge cuts in order starting from minimal edge cuts, i.e. cuts in which at least one of the disjoint resulting components has no other one-edge cut.
Without loss of generality we denote by $P_2$ the polygon with no further one-edge cuts and $P_1$ the other component as shown in Fig.~\ref{fig:edge-cuts}(a). Let $G_1^*$ and $G_2^*$ denote the subgraph duals
of $P_1$ and $P_2$, respectively in $G^*$.  Let $F_1$ be the face in $G_1^*$ that contains $G_2^*$. We now consider the topological subgraphs $G_2^* \cup F_1$ and $G_1^*$.

\begin{figure}\centering
\begin{tabular}{ccccc}
\includegraphics[width=0.3\textwidth]{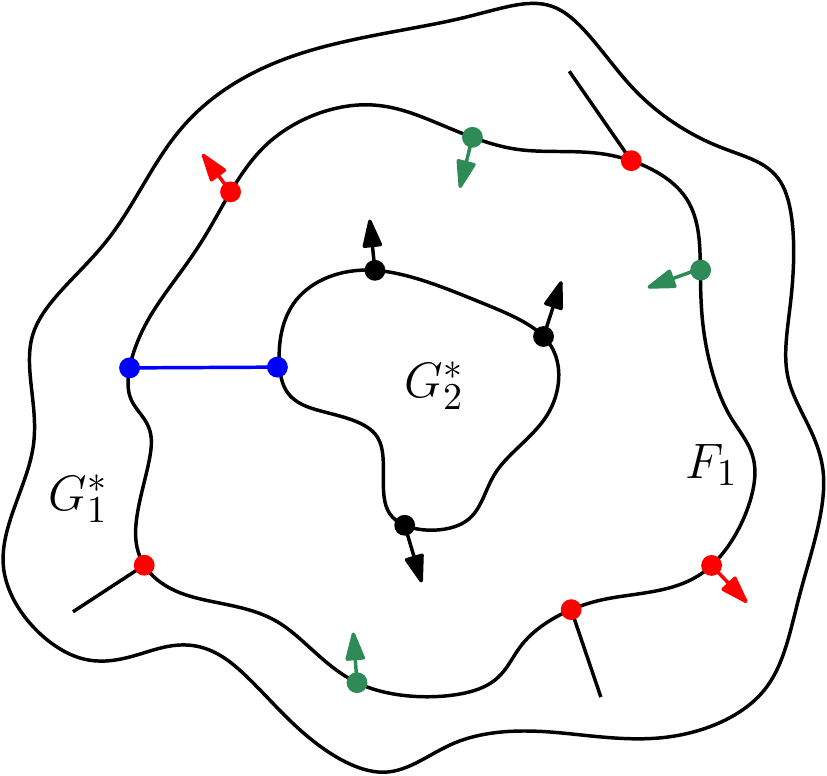}
&
~~~~
&
\includegraphics[width=0.3\textwidth]{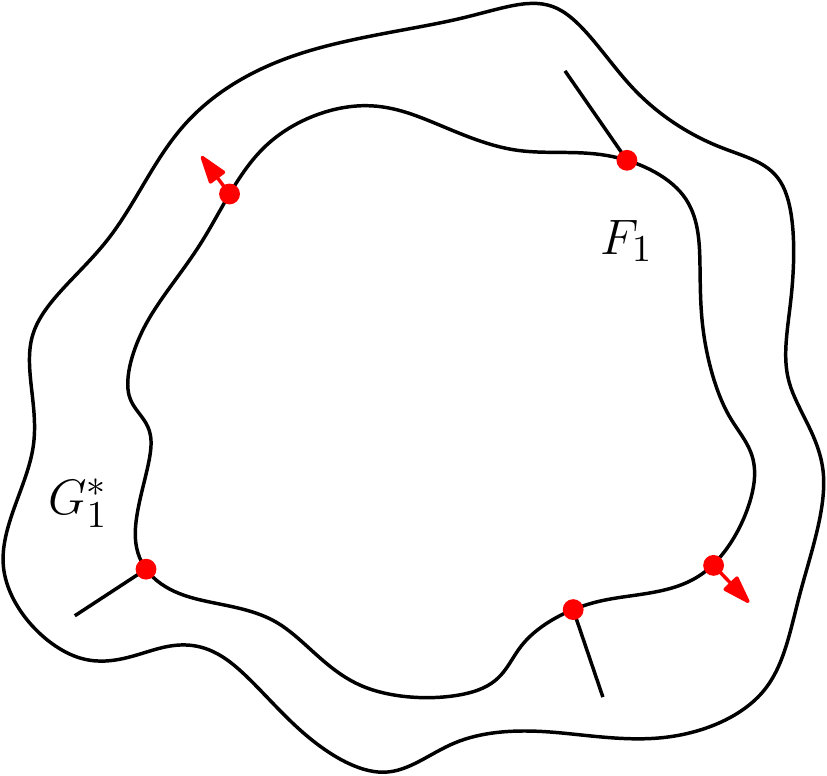}
&
~~~~
&
\includegraphics[width=0.24\textwidth]{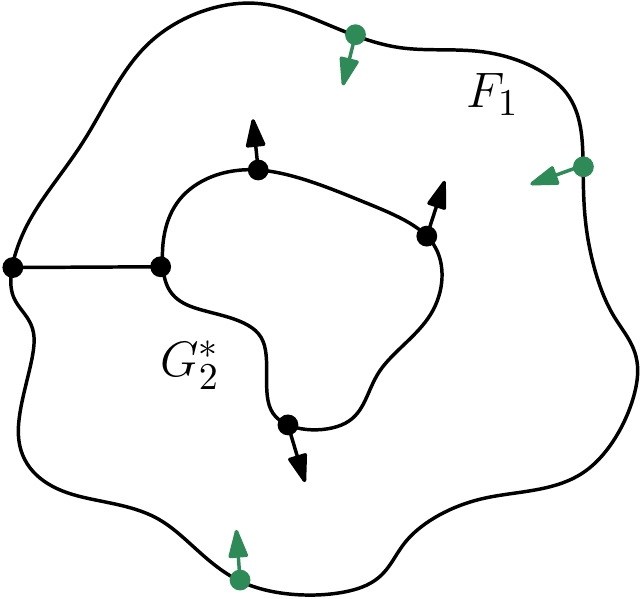}
\\
(a)
&
~~~~
&
(b)
&
~~~~
&
(c)
\end{tabular}
\caption{(a) The realization of $G^*_1$ and $G^*_2$ from Fig.~\ref{fig:edge-cuts}(a), as $P_1$ and $P_2$, (b) $G^*_1$, (c) $G_2^* \cup F_1$.}
\label{fig:decomposition}
\end{figure}
The vertices lying on the face of $F_1$ in $G_1^*$ are ascribed to exactly one of $G_1^*$ or $G_2^*\cup F_1$ as follows. If the vertex is of degree 3 then it goes
to the copy of $F_1$ in $G_1^*$, if it is of degree 2 it ascribes it as indicated by the vertex assignment of $G^*$.  See Fig.~\ref{fig:decomposition}(a), where vertices of $F_1$ in red are
assigned to $G_1^*$ and vertices of $F_1$ in green are assigned to $G_2^*\cup F_1$.

The algorithm can now reconstruct a unique topological graph having $G_2^* \cup F_1$ as a triangulation dual. This process creates a triangle for each vertex in $G_2^*\cup F_1$ whose orientation is unique
for vertices of degree 3 by the location of the edges in $G_2^*\cup F_1$ or given by the hole assignment in $G^*$ for vertices of degree 2. In this case the adjacencies are as prescribed by the edges in $G^*$.

Now for $G_1^*$, if it has no other one-edge cuts, we apply the same process as to $G_2^* \cup F_1$ and obtain a topological embedding. Otherwise we recursively process the one-edge cuts of $G_1^*$ as above and also obtain a topological realization of $G_1^*$.

Once we have the topological representations of each of $P_1$ and $P_2$, we merge them as follows.
First we reinsert all the vertices of the hole face $F_1$, then grow the center of the wheel in $P_1$ into a fat point (Fig.~\ref{fig:reconstruction-triangulation}(a)). Next, replace this fat point with $P_2$ (shown in Fig.~\ref{fig:reconstruction-triangulation}(b)).

\begin{figure}\centering
\begin{tabular}{ccc}
\includegraphics[width=.35\textwidth]{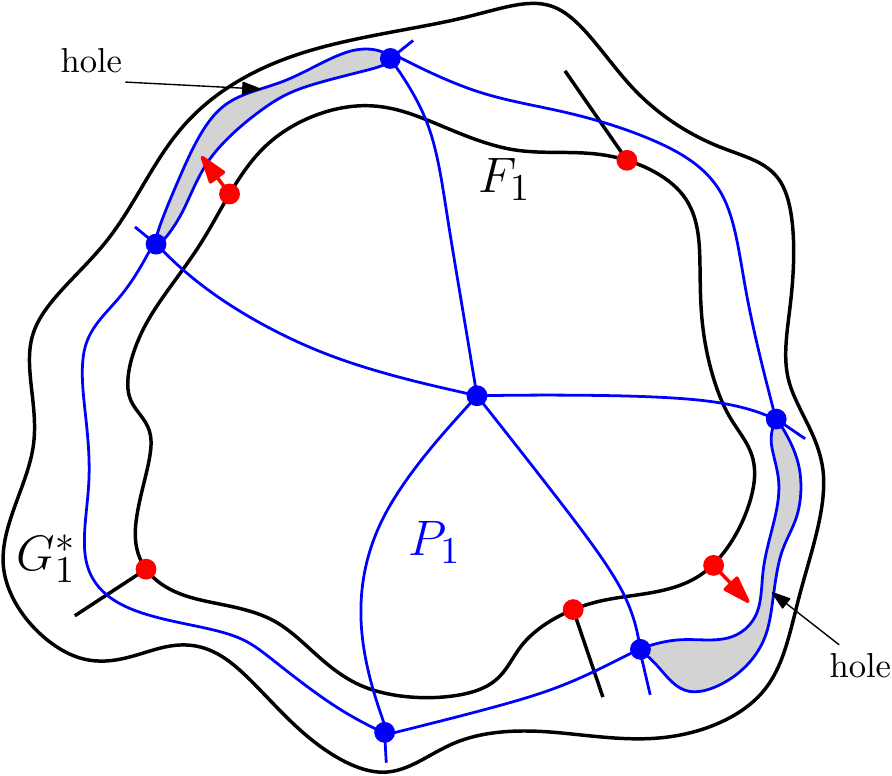}
  &
  ~~~~~~~~~~~
  &
\includegraphics[width=.3\textwidth]{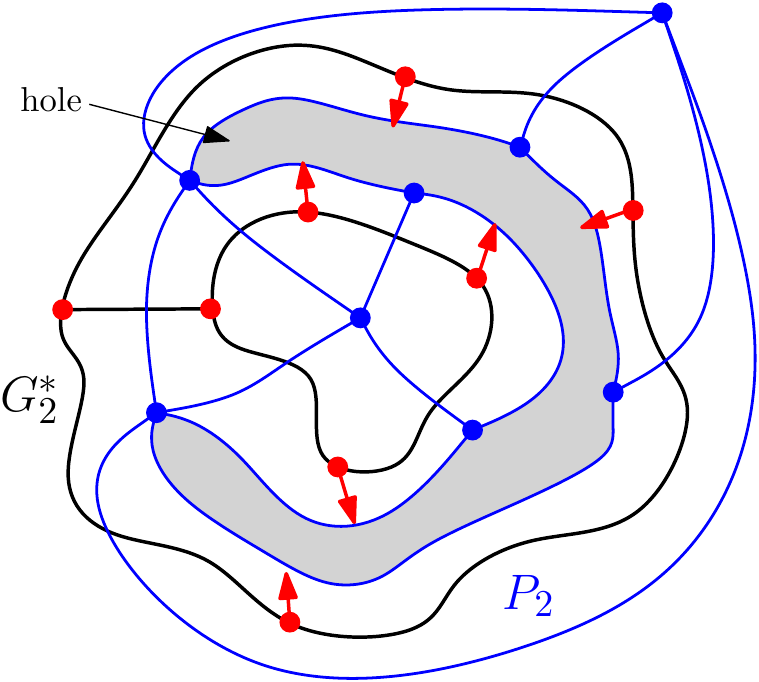}
   \\
   (a)
   &
   ~~~~~~~~~~~
   &
   (b)
\end{tabular}
\caption{ (a) Reconstruction of the triangulation of $P_1$ from $G^*_1$, (b) Reconstruction of the triangulation of $P_2$ from $G^*_2\cup F_1$.}
\label{fig:reconstruction-triangulation}
\end{figure}

If two adjacent vertices in $F_1$ in $G_1^*$ are not adjacent in $G^*$ it means there is an intermediate vertex in $G_2^*\cup F_1$. So the shared edge between the topological triangles in $P_1$ associated to those two vertices in the dual is replaced by a slim triangle corresponding to the dual of the
intermediate vertex in $G^*$ which landed in $G_2^*\cup F_1$ (illustrated by gray-filled triangles in Fig.~\ref{fig:inserted-triangles}).
Thus we can merge the two topological graphs $P_1$ and $P_2$ in a unique way as prescribed by the order of the vertices
in the face $F_1$ in $G^*$. Observe that none of these operations creates crossings, so the merged topological graph is planar. We continue this merging process until we have a topological
representation of the potential triangulated polygon. Again, by construction, this topological graph is planar.

\begin{figure}\centering
\includegraphics[width=.4\textwidth]{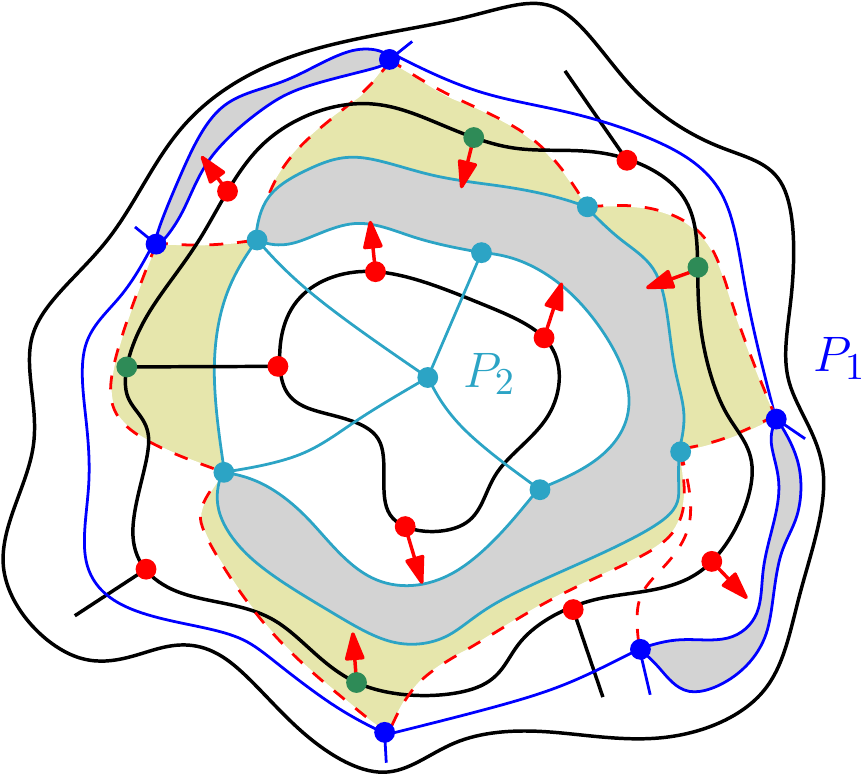}
\caption{Merge of Figures~\ref{fig:reconstruction-triangulation}(a) and \ref{fig:reconstruction-triangulation}(b).}
\label{fig:inserted-triangles}
\end{figure}

We then obtain a straight line embedding of this planar graph using F\'{a}ry's theorem. This is our candidate triangulated polygon plus point at infinity.

To conclude, we need to verify if this is a simple polygon, properly triangulated and with or without interior points as the case may be as follows. If the outer face of the embedding is not a triangle, we reject $G^*$. Otherwise, we consider each of the three vertices in the outer face as a potential point at infinity, i.e., a vertex connected
to all other vertices on the outer face. The remaining structure should additionally be a simple, properly triangulated polygon. If this is the case, we accept $G^*$,
else we reject this point and move to another one in the outer face. If none of the three points satisfy these conditions, we reject $G^*$ as not being the dual of a triangulation of a polygon.
\qed

\begin{figure}[h]\centering
\begin{tabular}{ccc}
  \includegraphics[width=0.35\textwidth]{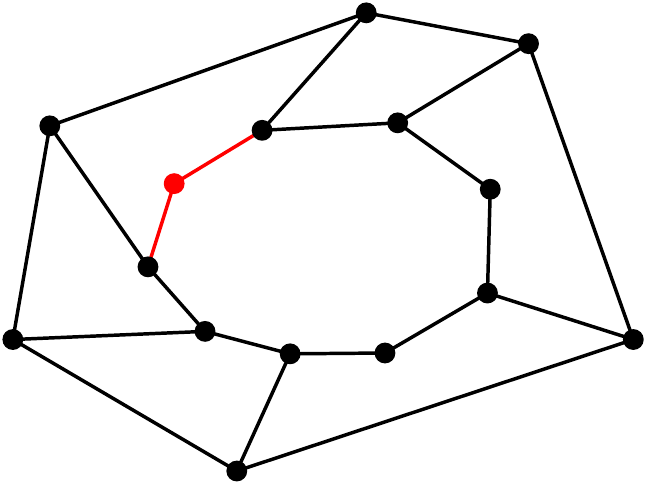}
  &
  ~~~~~~~~~~~~~
  &
   \includegraphics[width=0.35\textwidth]{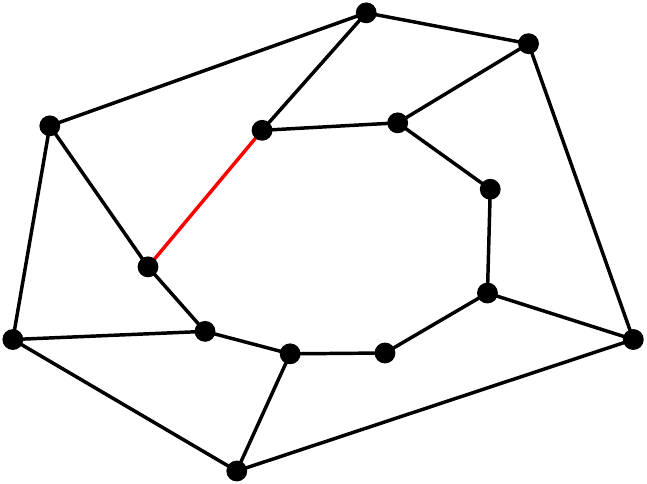}
   \\
   (a)
   &
   ~~~~~~~~~~~~~
   &
   (b)
\end{tabular}
\caption{(a) a dual graph of a triangulation, (b) a graph which does not correspond to a triangulation.}
\label{fig:dual}
\end{figure}

\begin{figure}\centering
   \includegraphics[width=0.4\textwidth]{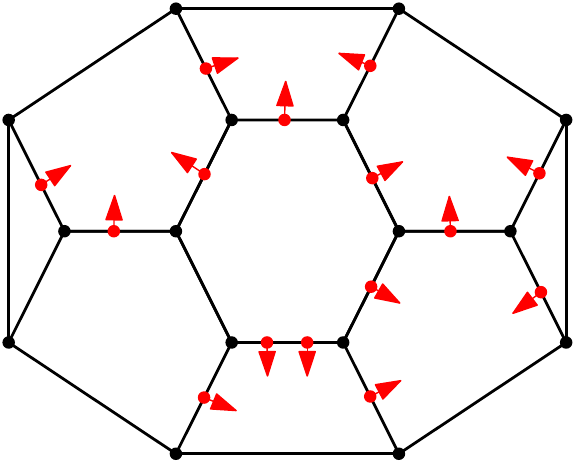}
\caption{A planar graph and an assignment (shown with red arrows) of vertices of degree 2 to faces of the graph.}
\label{fig:assignment}
\end{figure}

\qed
\end{proof}


\paragraph{Geometric TDR- and TDRS-with-known-holes.}\label{subs:gtdrwh}

Given a precise geometric embedding of the input graph, we want to decide if the graph is the triangulation dual of a polygon with holes with or without interior points.

\begin{theorem}\label{thm:geomtdr-with-holes}
The linear program described
in Section~\ref{subs:gtdrwoh} gives a necessary condition for the realization of the geometric TDR- and TDRS-with-known-holes problems with input $\{G^*,\mathcal H\}$  in linear time given the triangulation graph with the circumcenters/centroids of the triangles as vertices.
\end{theorem}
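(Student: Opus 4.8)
The plan is to extend the linear-programming approach of Theorem~\ref{thm:circumcentres} to the setting with known holes, observing that the presence of holes changes only the combinatorial structure over which the constraints are generated, not the nature of the constraints themselves. First I would note that in the geometric setting we are given a precise embedding of $G^*$ together with the assignment $\mathcal H$, so every vertex of $G^*$ comes with fixed coordinates representing the circumcenter (or centroid) of a triangle, and every edge of $G^*$ records an edge-adjacency between two triangles. The unknowns of the program are, exactly as in Section~\ref{subs:gtdrwoh}, the coordinates of the vertices of the candidate polygon $P$, including the vertices bounding the holes.

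The key step is to set up the same local linear constraints for each internal edge of the triangulation. For the circumcenter case, each dual edge $e^*$ of $G^*$ is perpendicular to its primal edge $e$ and the midpoint of $e$ lies on $e^*$; these orthogonality and bisection conditions are linear in the unknown vertex coordinates and can be written down directly from the fixed dual vertex positions. For the centroid case, one instead expresses collinearity of the centroid with the appropriate median, again yielding linear equations. The assignment $\mathcal H$ is used to determine, for each degree-$2$ vertex, which of its two incident faces corresponds to a hole (as established in Lemma~\ref{lem:deg2=hole}), and hence which triangle-vertex configuration of Fig.~\ref{fig:2cases} to impose. With $\mathcal H$ fixing these choices, the combinatorial structure of the triangulation is fully determined, so the constraint set is well defined. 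Since the number of faces of a planar graph is linear by Euler's formula, the total number of constraints is linear, and the two-dimensional program can be solved in linear time by Megiddo's algorithm~\cite{cit:megi}.

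Finally, I would argue correctness exactly as in the no-holes case: if $G^*$ with assignment $\mathcal H$ is realizable as a triangulation dual of a polygon with the given holes, then the coordinates of the vertices of an actual realizing triangulation satisfy all of the orthogonality/bisection (or centroid-collinearity) constraints, so the program is feasible. Contrapositively, infeasibility of the program certifies that no such realization exists, which is precisely the necessary condition claimed. The main obstacle, and the reason the theorem asserts only a necessary and not a sufficient condition, is identical to the obstruction already discussed after Theorem~\ref{thm:circumcentres}: the linear program enforces only the local metric constraints and is blind to global planarity of the reconstructed triangulation, together with the additional requirement that each hole face actually be realized as an empty (non-degenerate) hole rather than being filled in. As noted there, planarity between two triangles is a disjunction of linear constraints and so cannot be captured within a single LP, leaving full geometric recognition open; here the holes impose the further non-triviality that their interiors remain uncovered, which likewise falls outside the expressive power of the program.
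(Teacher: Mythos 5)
Your proposal is correct and follows essentially the same route as the paper: fix the hole side for each degree-$2$ vertex, set up the same orthogonality/bisection (or centroid) LP as in Theorem~\ref{thm:circumcentres}, solve it in linear time via Megiddo, and conclude necessity from the fact that any actual realization satisfies the constraints. The only cosmetic difference is that you resolve the degree-$2$ ambiguity using the given assignment $\mathcal H$, whereas the paper additionally observes that in the geometric setting this choice can be read off from the convex angle formed by the two triangulation edges perpendicular to the dual edges incident to the vertex; both fix the same combinatorial structure before the LP is written down.
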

\begin{proof}
Note that if a vertex $v$ is of degree 2 in $G^*$, deciding which face incident to $v$ contains the associated hole can be done by observing the location of the convex angle formed by the two edges of the triangulation perpendicular to the edges incident to $v$ in $G^*$. (For an illustration, see Fig.~\ref{fig:2cases}(a) in which the hole can only reside in the right face.) We then set up an LP as in Theorem~\ref{thm:circumcentres} which gives a potential realization of the triangulation. We then test this solution to verify that the polygon and holes obtained are simple.
\end{proof}

\paragraph{Topological TDR- and TDRS-with-unknown-holes.}
\label{subs:ttdrwuh}

The input for this version of the problem is a planar graph $G^*$ with its face-embedding. However, the total assignment of its vertices of degree 2 to faces of $G^*$ is unknown. Here, we only state that the problem is NP-complete (Theorem~\ref{thm:tdrwh}) and proceed in our analysis. The proof of this claim is provided in Section~\ref{sec:npcomplete}.

\begin{theorem}\label{thm:tdrwh}
Determining if an input graph $G^*$ is the dual of a triangulation
of a polygon with holes and with or without interior points is NP-complete.
\end{theorem}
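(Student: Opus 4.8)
The plan is to establish membership in NP first and then prove NP-hardness by reduction from a known NP-complete problem. Membership in NP is the easy direction: given a graph $G^*$ together with a proposed total assignment $\mathcal H$ of its degree-2 vertices to incident faces, Theorem~\ref{thm:tdrwkh} shows we can verify in linear time whether $\{G^*,\mathcal H\}$ is the dual of a triangulation of a polygon with holes. Thus $\mathcal H$ itself serves as a polynomial-size certificate, and the verification procedure is exactly the known-holes algorithm. So I would state this in one short paragraph and move on.

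The substance is NP-hardness. By Lemma~\ref{lem:deg2=hole} we know that a valid assignment must send degree-2 vertices to faces in groups of at least three, and each such ``filled'' face becomes a hole of the realized polygon. The combinatorial heart of the difficulty is therefore choosing, for each degree-2 vertex, which of its two incident faces receives it, so that the resulting partition into holes is globally consistent with being realizable as a simple polygon with holes. This grouping/partition flavor suggests reducing from a constraint-satisfaction or partition-style problem. I would aim to encode Boolean variables as degree-2 vertices (or small gadgets of degree-2 vertices) whose two incident-face choices correspond to the two truth values, and to build clause gadgets that are realizable as a valid triangulation dual precisely when the clause is satisfied. Natural source problems to try are \textsc{3-SAT}, \textsc{Planar 3-SAT} (attractive because our graphs are already planar, so a planar source keeps the embedding constraints honest), or a planar partition/covering problem whose structure matches the ``at least three degree-2 vertices per hole'' requirement of Lemma~\ref{lem:deg2=hole}.

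The key steps, in order, are: (i) fix the source problem and its instance; (ii) design a variable gadget, a small planar triangulation-dual fragment containing degree-2 vertices whose assignment choices are in bijection with a truth assignment to the variable, using the two-case dichotomy of Fig.~\ref{fig:2cases} so that ``hole on the left face'' versus ``hole on the right face'' encodes true versus false; (iii) design a clause gadget, again a planar fragment, that admits a consistent valid assignment (hence a realization as a polygon with holes) if and only if at least one of its literals is set to satisfy it; (iv) wire gadgets together respecting planarity, so that the global graph $G^*$ is planar, $3$-regular away from the chosen degree-2 vertices, and has no degree-2 vertex on its outer face (guaranteeing the outer boundary is $3$-connected as required by the discussion preceding Lemma~\ref{lem:deg2=hole}); and (v) prove the reduction correct, namely that $G^*$ is a triangulation dual of some polygon with unknown holes iff the source instance is satisfiable. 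Polynomial size and the polynomial-time constructibility of the reduction must be checked, but these are routine once the gadgets are fixed.

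The main obstacle will be step (iii)--(iv): ensuring that the \emph{local} freedom in assigning degree-2 vertices genuinely forces a \emph{global} satisfiability condition, while simultaneously keeping every gadget realizable as an actual simple polygon with holes. The tension is that Theorem~\ref{thm:tdrwkh} makes the known-holes case tractable precisely because each hole's assignment is pinned down, so the hardness must come entirely from the combinatorial explosion of assignment choices; the gadgets must transmit truth-value information between variable occurrences without accidentally creating a forced or forbidden configuration (for instance, a spurious $2$-cut, a degree-2 vertex on the outer face, or a face receiving fewer than three degree-2 vertices) that would make the instance trivially accept or reject independent of the encoded formula. Getting the clause gadget to respect the ``at least three degree-2 vertices per hole'' constraint of Lemma~\ref{lem:deg2=hole} while still behaving like a logical OR is, I expect, where most of the design effort and the correctness proof will concentrate. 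The full construction and its verification are deferred to Section~\ref{sec:npcomplete}.
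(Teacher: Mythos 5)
Your skeleton matches the paper's proof: the paper also reduces from \textsc{Planar 3-SAT}, encodes a variable's truth value by which of the two incident faces each of its degree-2 vertices sends its hole to (the dichotomy of Fig.~\ref{fig:2cases}), and turns the clause into a logical OR exactly via the ``at least three degree-2 vertices per hole'' threshold of Lemma~\ref{lem:deg2=hole} --- the clause face is built with only two degree-2 vertices of its own, so at least one literal must contribute a third. Your NP-membership argument (guess $\mathcal H$, verify with the known-holes algorithm of Theorem~\ref{thm:tdrwkh}) is also sound and is a point the paper leaves implicit.

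The genuine gap is that you never construct the gadgets, and the part you defer is precisely where the reduction can fail: fan-out and negation. A variable face in the paper's construction has \emph{exactly} three degree-2 vertices (so that ``hole inside'' is even an option under Lemma~\ref{lem:deg2=hole}), which caps at three the number of neighbouring gadgets it can signal to; and there is no way for a single face boundary to invert the hole/no-hole signal. The paper resolves both problems with two additional gadget types your plan omits: a \emph{splitter face}, which receives one degree-2 vertex and emits two \emph{negated} copies of it (chaining two splitters restores the sign, and iterating splitters reduces a variable vertex of arbitrary degree in the incidence graph to degree 3), and an \emph{absorber gadget}, which swallows surplus copies without constraining the assignment. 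Without these, a variable occurring negated, or occurring in more than a couple of clauses, cannot be wired to its clause gadgets while keeping every face's degree-2 count consistent with Lemma~\ref{lem:deg2=hole}; so the reduction as proposed only handles a restricted fragment of planar 3-SAT. Your step (iv) also needs the degree-normalization pass on the incidence graph (the paper's construction of $H_\varphi$) before gadget substitution; that is routine once splitters exist, but it is not routine without them.
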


\paragraph{Combinatorial TDR- and TDRS-with-unknown-holes.}\label{subs:ctdrwh}

In this subsection, the input graph $G^*$ is given by its adjacency matrix.
We will show that the 3-SAT reduction from the topological TDR- and TDRS-with-unknown-holes problems (see Section \ref{sec:npcomplete}) holds as well.
If the embedding found by the combinatorial TDR solver is the same as in the reduction, we would need to solve the 3-SAT problem.
However, it remains to be shown that there does not exist a different embedding with an alternate polygonal realization
and the answer being ``yes'', without this embedding necessarily implying satisfiability of the 3-SAT formula.

Recall that a 3-regular graph has a unique embedding in the plane. We now remove the vertices of degree 2 from
the 3-SAT reduction graph and replace them by an edge, thus giving a 3-regular graph with a unique embedding.
If the combinatorial TDR- and TDRS-with-unknown-holes problems found a different embedding, we can replace the vertices of degree 2
in this alternate embedding with a single edge, thus obtaining a different embedding for the 3-regular graph, which is
a contradiction. Hence, the combinatorial graph obtained from the reduction above has a polygonal realization if and only
if the underlying formula is satisfiable and we obtain:
\begin{theorem}\label{thm:ctdrwh}
The combinatorial TDR- and TDRS-with-unknown-holes problems are NP-complete.
\end{theorem}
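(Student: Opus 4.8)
The plan is to establish membership in NP and NP-hardness separately. For membership, I would observe that a realization can be certified compactly: a witness consists of a combinatorial (face) embedding of $G^*$ together with a total assignment $\mathcal{H}$ of its degree-$2$ vertices to incident faces. Given such a witness, the combinatorial with-unknown-holes instance becomes a topological with-known-holes instance, which by Theorem~\ref{thm:tdrwkh} is decidable in linear time. Since both the embedding and the assignment have polynomial size and the verification is polynomial, the problem lies in NP.

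For NP-hardness I would reuse, without change, the $3$-SAT reduction built in Section~\ref{sec:npcomplete} for the topological with-unknown-holes problem (Theorem~\ref{thm:tdrwh}). That reduction maps a formula $\varphi$ to a graph $G^*$ (together with a fixed face-embedding) so that $G^*$ admits a valid realization for some total assignment exactly when $\varphi$ is satisfiable. In the combinatorial setting we supply only the adjacency matrix of the same $G^*$, and the map $\varphi \mapsto G^*$ remains polynomial-time computable. The one gap that must be closed is the possibility that a combinatorial solver exploits its freedom to choose the plane embedding of $G^*$: an embedding different from the one fixed in the reduction might, a priori, admit a polygonal realization even when $\varphi$ is unsatisfiable, which would break the correctness of the reduction.

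The heart of the argument is therefore to show that the embedding of $G^*$ is forced, so that no such alternate realization can arise. Every vertex of the reduction graph has degree $3$ or $2$, the degree-$2$ vertices being exactly those that encode holes. Suppress each degree-$2$ vertex by replacing its two incident edges with a single edge, obtaining a $3$-regular graph $H$. I would verify, from the explicit gadgets of Section~\ref{sec:npcomplete}, that $H$ is $3$-connected; Whitney's theorem (Proposition~\ref{prop:simple-properties}(2)) then gives $H$ a plane embedding that is unique up to reflection and the choice of outer face. Any plane embedding of $G^*$ induces, under the same suppression, a plane embedding of $H$, and uniqueness of the latter forces the former to agree with the reduction's embedding up to the topologically irrelevant subdivision points on the edges. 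Hence the cyclic face structure, and thus the family of admissible hole assignments, is the same for every embedding a combinatorial solver could select.

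With the embedding pinned down, the combinatorial instance collapses onto the topological one: $G^*$, presented by its adjacency matrix, is a triangulation dual of a polygon with holes if and only if its essentially unique embedding admits a valid assignment, which by the reduction of Theorem~\ref{thm:tdrwh} happens if and only if $\varphi$ is satisfiable. Together with membership in NP, this yields NP-completeness. I expect the main obstacle to be the $3$-connectivity of the suppressed graph $H$: it must be checked against the concrete reduction gadgets, and one must ensure that suppressing degree-$2$ vertices neither creates parallel edges nor drops connectivity below $3$, since Whitney's uniqueness --- and with it the entire argument --- fails otherwise.
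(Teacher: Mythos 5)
Your proposal follows essentially the same route as the paper: reuse the planar 3-SAT reduction of Section~\ref{sec:npcomplete}, and rule out alternate embeddings by suppressing the degree-2 vertices to obtain a cubic graph whose plane embedding is unique, so that any embedding a combinatorial solver could choose must coincide with the reduction's. You are in fact slightly more careful than the paper, which asserts only that ``a 3-regular graph has a unique embedding in the plane'' without stating the 3-connectivity hypothesis that Whitney's theorem requires --- the verification against the concrete gadgets that you flag as the main obstacle is a genuine detail the paper leaves implicit.
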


\section{NP-Completeness of topological TDR- and TDRS-with-unknown-holes problems}
\label{sec:npcomplete}

In this section, we prove Theorem~\ref{thm:tdrwh}.
Let $X=(x_1,x_2,\ldots,x_m)$ be a set of boolean variables.
Let $\varphi$ be a 3-SAT boolean formula of the type
$\varphi=(a_{11}\vee a_{12}\vee a_{13})\wedge (a_{21}\vee a_{22}\vee a_{23})\wedge \ldots \wedge (a_{n1}\vee a_{n2}\vee a_{n3})$,
where $a_{ij}$ is either $x_k$ or $\neg x_k$ (called a \emph{literal}). We restrict our attention to planar
3-SAT formulae. A~planar 3-SAT formula, by definition, can be represented by a planar graph which has a vertex for every clause and every variable, and has an edge connecting said variable to every clause in which it appears (negated or non-negated).

Planar 3-SAT is known to be NP-complete~\cite{cit:lich}.
We will reduce planar 3-SAT to dual triangulation recognition by
constructing a graph $G^*$ that is the dual of a triangulation of a polygon with holes if and only if given formula $\varphi$ is satisfiable. Our reduction creates $G^*$ which consists of four types of gadgets (Fig.~\ref{fig:4types-faces}(a)--(d) resp.):

\begin{enumerate}
\item variable faces which correspond to variable vertices;
\item clause gadgets which correspond to clause vertices;
\item splitter faces which correspond to some edges connecting a variable vertex to a clause; and
\item absorber gadgets which act as dead ends for extra splitter wires which are not needed.
\end{enumerate}

\begin{figure}[h]\centering
\begin{tabular}{cccc}
   \includegraphics[width=0.2\textwidth]{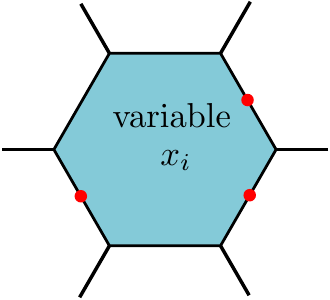}\hspace{0.12cm}
   &
   \includegraphics[width=0.2\textwidth]{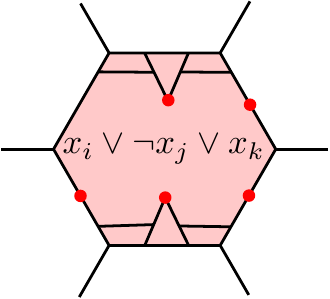}\hspace{0.12cm}
   &
   \includegraphics[width=0.2\textwidth]{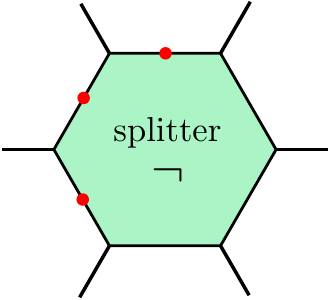}\hspace{0.12cm}
   &
   \includegraphics[width=0.2\textwidth]{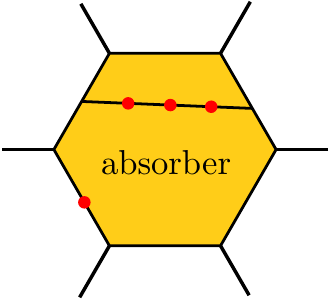}
   \\
   (a)\hspace{0.12cm}
   &
   (b)\hspace{0.12cm}
   &
   (c)\hspace{0.12cm}
   &
   (d)
\end{tabular}
\caption{The types of faces and gadgets of $G^*$: (a) a~variable face,
(b) a clause gadget, (c) a~splitter face and (d) an absorber gadget.} \label{fig:4types-faces}
\end{figure}

Here we describe the purpose of each face and gadget type.
Each face contains some vertices of degree 2 which are only compatible with a triangulation dual if they are adjacent to a hole in the polygon. Each such vertex lies on the boundary between two faces, and there is a choice as to in which of the two faces the hole lies. Our constructions ensures that there cannot be a hole on both faces of a degree 2 vertex. Also, note that even though our examples contain many faces of length 6, the length of faces is in fact determined by given the formula~$\varphi$ and can be arbitrary.

Recall now Lemma \ref{lem:deg2=hole} for the fact that if $G^*$ is the dual graph of a polygon with holes, then each face of $G^*$ that encloses a hole of the polygon, has at least three vertices of degree 2 in said face.

\noindent
{\em Variable face.~~} Each variable face has exactly three vertices of degree 2. This means that either there is a hole in the
variable face (which corresponds to an assignment of {false} to said variable in $\varphi$) or there is no hole and each
of those three vertices of degree 2 have a hole on the other face it belongs to.
This other face is either a part of a clause gadget, a splitter face or an absorber gadget.

\noindent{\em Clause gadget.~~}
The clause gadget (see Fig.~\ref{fig:4types-faces}(b)) has one ``main'' clause face with two vertices of degree 2 shared with other faces with no other vertices of degree 2. This means that there must always be a hole in the clause
face. Each variable contributes a vertex of degree 2 to a clause face (either directly, or via a splitter face). We need
at least three vertices of degree 2 for a hole. Hence, at least one of the degree 2 vertices has the hole in the clause face, otherwise $G^*$ is not the dual graph of a polygon with holes.

If a variable is non-negated in the clause, then the clause face is connected directly to the variable face,
unless we need extra non-negated copies, in which case we use the double-in-series splitter trick (see Splitter face).

If the variable is negated in the clause, the corresponding degree 2 vertex is contributed by a splitter face.

\noindent{\em Splitter face.~~}
The splitter face (see Fig.~\ref{fig:4types-faces}(c)) ``receives'' a degree 2 vertex corresponding to a variable and does two things at the same time: (1) it creates two copies of that variable, and (2) it negates each of them. Hence, the splitter face is always incident to precisely three vertices of degree 2.

The splitter is connected to a variable face or a splitter face by sharing a pair of edges centred around a vertex of degree 2. This is where it ``receives'' the vertex of degree 2 from. It ``passes'' the negated copies of that vertex to another splitter, to an absorber or to a clause gadget again by sharing a pair of edges centred around the copy (a vertex of degree 2).

The splitter face always creates two negated copies of a vertex of degree 2. If only one copy is needed, the other one is passed to a neighbouring absorber face (see below). If a non-negated copy of a vertex of degree 2 is needed, we pass it through another splitter to introduce ``double negation'' (and absorb the redundant copy).

The polygon may or may not contain holes in the splitter faces. If a hole is present, it indicates that the splitter passes a degree 2 vertex forward corresponding to the negated form of the variable.

\noindent{\em Absorber gadget.~~} The absorber gadgets always correspond to parts of a triangulation, regardless of the
rest of the structure of the graph and its polygonal interpretation. Their purpose is to consume unwanted vertices of degree 2 and provide space for holes of a polygon. The vertex of degree 2 is passed to a part of an absorber with three degree 2 vertices, so the face can contain hole regardless whether the degree 2 vertex is assigned to be part of that hole, or not.

\medskip
We construct a graph such that if the variable $x_i$ corresponding to the variable face $F_{x_i}$ is false in a satisfiable assignment of $\varphi$, the degree 2 vertices are assigned to $F_{x_i}$ (the red arrows in our figures point inwards), and if the variable is true in the assignment, then all the degree 2 vertices are assigned to the other face. The construction begins by constructing the planar graph $G_\varphi$, which represents $\varphi$, and embedding it in the plane. Later, we will replace its vertices by corresponding gadgets. However, for this to be possible, the graph needs to be modified first.

Each edge in $G_\varphi$ indicates a ``transfer'' of a degree 2 vertex. We first need to modify the graph so that the vertices representing variables of $\varphi$ have degree precisely~3. If the degree of such a vertex $x_i$ is less than 3, we increase it by attaching the required number of new vertices (those will be replaced by absorber gadgets). If the degree of $x_i$ is more than 3, we reduce its degree by detaching $deg(x_i)-2$ edges consecutive in cyclic order around $x_i$ (with respect to the embedding of $G_\varphi$), routing them into a new splitter vertex $s$, and connecting $x_i$ to the splitter. Note that this negates the variable $x_i$, so some of the edges may need to be routed through another splitter to cancel this negation. This produces a plane graph where $x_i$ has degree $3$ and the splitter vertex $s$ has degree $deg(x_i) - 1$. Repeatedly applying this construction, the degree of $s$ can be decreased to $3$.

By the construction above, we obtain a plane graph $H_\varphi$ where all the variable, splitter and clause vertices have degree $3$, and absorbers have degree $1$. Now we replace every vertex with the respective gadget so that every edge in $H_\varphi$ is represented by a degree 2 vertex surrounded by edges shared between two gadgets, and so that the topology of the gadgets is equivalent to the embedding of $H_\varphi$ (this is similar to constructing a dual graph of $H_\varphi$). Let us denote the obtained graph by $H^*$. The embedding of $H^*$ contains some ``void'' areas between some gadgets. Those areas can be suitably attributed to faces of gadgets (by removing edges). We obtain graph $G^*$, call it the \emph{gadget graph of $\varphi$}, formed by vertices of degree $3$ and $2$.
See Fig.~\ref{fig:example-graph-pl3SAT}(b) for an example of a formula and the corresponding gadget graph.


We can now argue that graph $G^*$ is a triangulation dual if and only if the formula~$\varphi$ is satisfiable.

\begin{lemma}
\label{lem:equivalence}
The gadget graph $G^*$ of formula $\varphi$ is dual of a triangulation of a simple polygon with holes if and only if $\varphi$ is satisfiable.
\end{lemma}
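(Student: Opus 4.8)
The plan is to establish the equivalence by exhibiting a two-way correspondence between satisfying truth assignments of $\varphi$ and valid hole-assignments $\mathcal{H}$ of the gadget graph $G^*$. The two tools I would lean on are Lemma~\ref{lem:deg2=hole}, which says that a face can enclose a hole only when at least three degree~2 vertices are assigned to it, and Theorem~\ref{thm:tdrwkh}, which turns a valid pair $\{G^*,\mathcal{H}\}$ into an actual triangulated polygon with holes. Throughout I fix the encoding already described: a hole inside a variable face $F_{x_i}$ (equivalently, all three of its degree~2 vertices assigned inward) means $x_i=\mathrm{false}$, and the absence of a hole (all three assigned outward, to the incident clause, splitter or absorber faces) means $x_i=\mathrm{true}$. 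A splitter face reverses this encoding, so that a copy leaving a splitter carries the negation of what entered it.

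For the direction $(\Leftarrow)$, I would begin from a satisfying assignment and build $\mathcal{H}$ locally, gadget by gadget: assign the three degree~2 vertices of each false variable to its own face and those of each true variable outward, then propagate these choices through the splitters according to the negation rule. I would then check the three invariants that make $\mathcal{H}$ valid. First, \emph{consistency}: no degree~2 vertex is claimed by both incident faces, which holds because every rule assigns each vertex to exactly one face. Second, \emph{feasibility of every hole-face}: each face that receives any degree~2 vertex receives at least three, so Lemma~\ref{lem:deg2=hole} is met; this is automatic for variable, splitter and absorber faces by their degree~2 counts, and for a clause face it is exactly where satisfiability is used, since the clause has two intrinsic degree~2 vertices and the satisfied literal contributes the needed third. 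Third, \emph{global realizability}: the gadget layout, together with the void areas absorbed into gadget faces, yields a simple polygon with the prescribed holes; this is a design property of the gadgets, and once the local conditions hold the reconstruction of Theorem~\ref{thm:tdrwkh} succeeds, so $\mathcal{H}$ is valid and a realization exists.

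For the direction $(\Rightarrow)$, I would assume $G^*$ is a triangulation dual, so some valid $\mathcal{H}$ exists, and read off a truth assignment by the same encoding. To see that every clause is satisfied I would use that a clause face carries only two intrinsic degree~2 vertices yet \emph{must} enclose a hole, because those two vertices have no alternative consistent home elsewhere in the construction. By Lemma~\ref{lem:deg2=hole} the clause face therefore needs at least one further degree~2 vertex, necessarily one arriving along a literal wire; tracing that vertex back through any intervening splitters, each of which flips the encoded value exactly once, shows the corresponding literal is true under the extracted assignment, so the clause is satisfied.

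The step I expect to be the main obstacle is verifying splitter correctness rigorously. Concretely, I must show that the three degree~2 vertices of a splitter face are coupled so that fixing the assignment of the incoming vertex deterministically forces the two outgoing vertices to the negated value, and that this holds simultaneously with global consistency, i.e.\ without ever forcing a hole on both sides of a shared degree~2 vertex. This demands a careful local case analysis of each gadget's incidence structure, hinging on the observation that a splitter face, having exactly three degree~2 vertices, can enclose a hole precisely in the configuration that propagates the negated literal forward and absorbs the unused copy into a neighbouring absorber.
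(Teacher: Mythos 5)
Your proposal follows essentially the same route as the paper's proof: the same encoding (hole in a variable face means \emph{false}), the same propagation of hole/no-hole choices through splitters to realize the negation parity, the same counting argument via Lemma~\ref{lem:deg2=hole} showing a clause face needs a third degree~2 vertex supplied by a true literal, and the same read-off of a satisfying assignment in the reverse direction. The splitter-correctness step you flag as the main obstacle is handled in the paper at the same informal level (one degree~2 vertex moving in or out of the splitter face forces its hole status), so your argument is complete to the paper's own standard.
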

\begin{proof}
If the formula is satisfiable this means that there is a unique true or false assignment for each variable.
We use this assignment to decide if the degree 2 vertices in a variable face have a corresponding hole in this face or outside of it. Either choice forces the assignment in the other face. In an absorber face,
that face will have a hole regardless of the variable assignment. In a splitter face,
the hole (no-hole) choice in the variable face forces a no-hole (hole, resp.) choice
in the splitter face since there is one vertex of degree 2 now out (correspondingly now
in) of the splitter face.
In the case of the clause face, if the vertex of degree 2 corresponding to a variable
is true and it appears non-negated (or the variable is false and it appears negated)
then it has no hole on the other side. This means
the clause face has now at least three vertices of degree 2 with hole inside the clause face,
and we can now safely place a hole in the clause face.


Observe that since the formula is satisfiable, every clause face has at least one literal
which is true and hence a third vertex of degree 2. So, variable and splitter
faces have consistent holes by our choice of their placement; absorber faces are indifferent to
our choice of hole locations; and clause faces always have consistent holes since at least
one of the vertices of degree 2 has no hole on the other side.

Now assume that the graph is realizable as a triangulation dual. Then assign to each variable
in $\varphi$ false if there is a hole in the corresponding variable face and true if there
is no hole in the variable face. Observe that the parity of splitter faces connecting the
variable face and the clause face correspond by construction to whether the variable
appears negated or non-negated in the clause. Thus it follows that if there is (or resp. there is not)
a hole in the variable face, then the associated vertex of degree 2 is assigned to the clause
face if and only if the variable appears non-negated (negated, resp.). Since the clause
face was realizable as the dual of a triangulation with a hole, it follows that at least
one of the literals appearing in the clause is set to true and hence, the clause is
satisfied.
\qed
\end{proof}

\medskip
Fig.~\ref{fig:example-graph-pl3SAT}(b) illustrates a graph $G^*$ associated to the planar 3-SAT formula
 $\varphi = (x_1\vee\neg x_2\vee x_3)\wedge (x_2\vee x_3\vee x_4)\wedge (x_1\vee\neg x_3\vee\neg x_4)$
together with a correct assignment of vertices of degree 2 to faces of $G^*$. Thus $G^*$ is a dual graph of a triangulation, which implies
that the formula is satisfiable for the truth assignment $U$: 
$(x_1,x_2,x_3,x_4)^U = (T,F,T,F)$.

\begin{figure}[h]\centering
\begin{tabular}{cc}
   \includegraphics[width=0.4\textwidth]{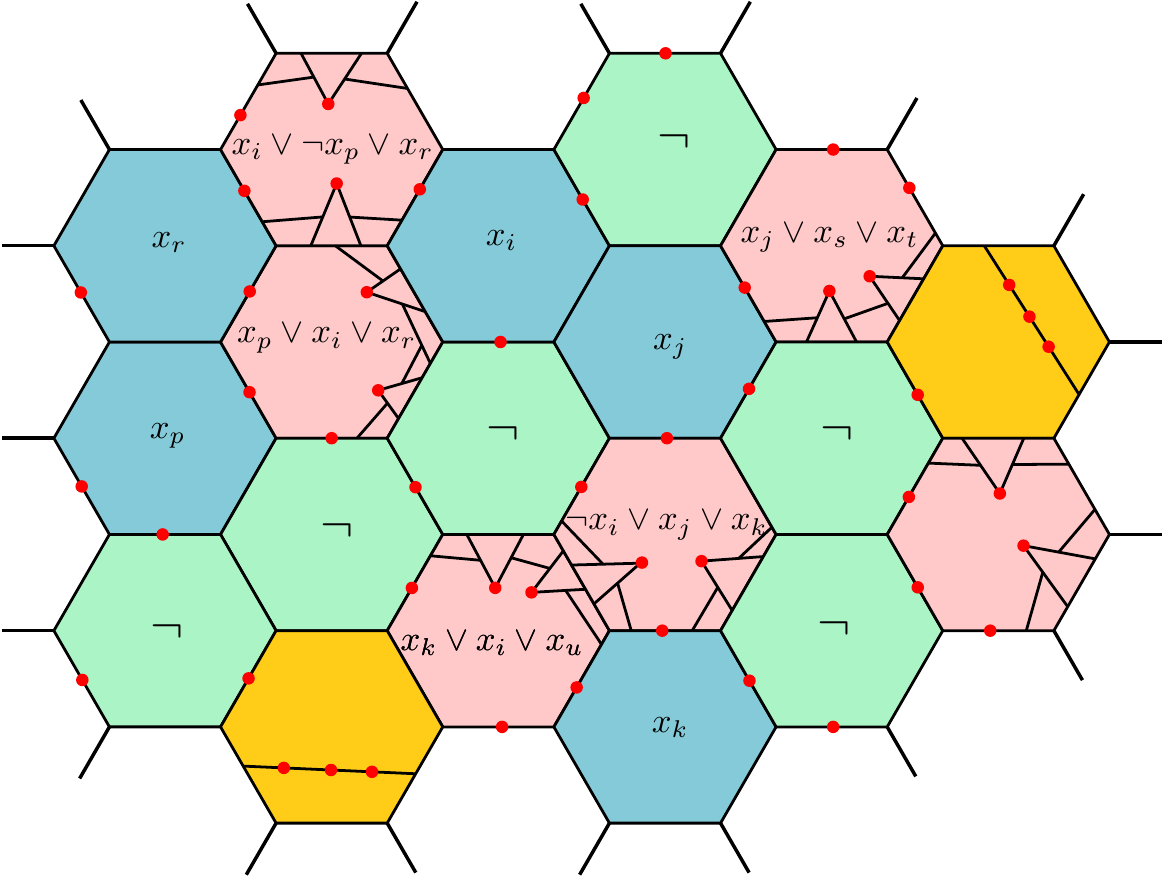}
   &
   \includegraphics[width=0.4\textwidth]{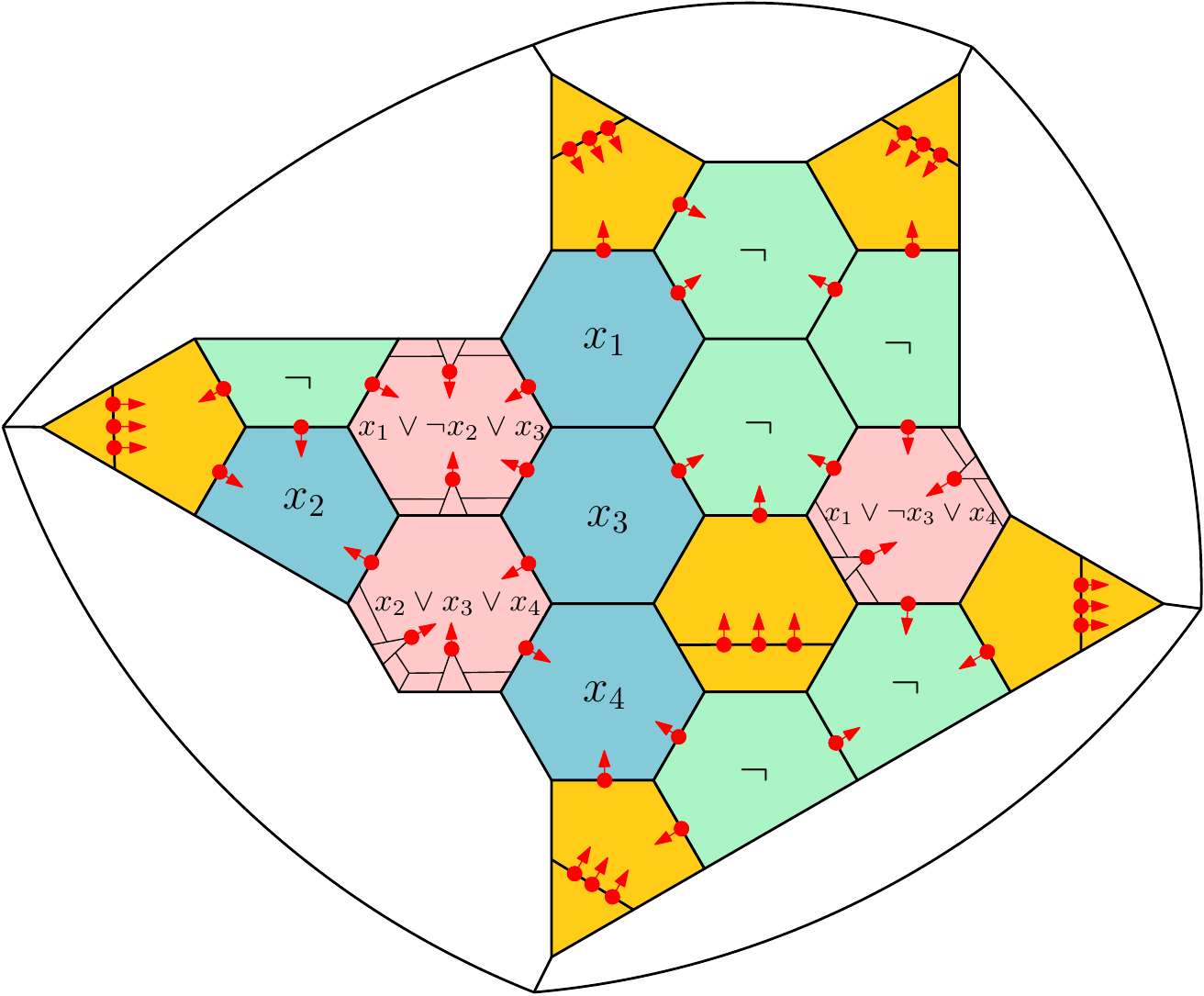}\\
   (a)
   &
   (b)
   \end{tabular}
   \caption{(a) A part of the gadget graph associated to a planar 3-SAT formula,
   (b) An example of a gadget graph $G^*$ corresponding to the planar 3-SAT formula
   $\varphi = (x_1\vee\neg x_2\vee x_3)\wedge (x_2\vee x_3\vee x_4)\wedge (x_1\vee\neg x_3\vee x_4)$.
   The red arrows show the assignment of vertices of degree $2$ to faces of $G$.}
\label{fig:part-planar3SAT}\label{fig:example-graph-pl3SAT}
\end{figure}

\section{Conclusions and Open Questions}
\label{sec:conclusion}

We provided an exhaustive analysis of the triangulation dual recognition problem. We showed that some of them can be solved in linear time and some of them are NP-complete. Our work focused on duals of general triangulations of simple polygons.
We proposed several models for the geometric setting.
We presented a method which in linear time finds a candidate solution, or rejects. The candidate solution needs to be further tested. 
As our approach is not capable of enumerating all the candidate solutions, it remains an open problem if
recognition is possible under either of these models. Any bounds for necessary and/or sufficient conditions under other choices for triangle representatives are open too.

\end{document}